\documentclass[journal]{IEEEtran}
\usepackage{amsmath,amssymb,bm}
\usepackage{graphicx}
\usepackage{algorithm}
\usepackage{algpseudocode}
\usepackage[colorlinks,citecolor=blue,linkcolor=blue]{hyperref}

\newtheorem{remark}{Remark}
\newtheorem{theorem}{Theorem}
\newtheorem{corollary}{Corollary}

\newtheorem{proposition}{Proposition}

\newcommand{\qed}{\hfill$\blacksquare$}
\newenvironment{proof}{\par\vspace{0.5em}\noindent\textbf{Proof.}\hspace{0.5em}}{\par\qed}

\newcommand{\E}{\mathbb{E}}
\newcommand{\Prob}{\mathbb{P}}
\newcommand{\tr}{\mathrm{tr}}
\newcommand{\diag}{\mathrm{diag}}
\newcommand{\rank}{\mathrm{rank}}
\renewcommand{\Re}{\mathrm{Re}}

\begin{document}

\title{Karhunen-Lo\`{e}ve Expansion for Fluid Antenna Systems: Information-Theoretic Optimal Channel Compression and Outage Analysis}

\author{Tuo~Wu  
\thanks{T. Wu  is with the School of Electronic and Information Engineering, South China University of Technology, Guangzhou 510640, China  (E-mail: $ \rm wtopp0415@163.com$). } 
}
\markboth{}{Wu\MakeLowercase{\textit{et al.}}: KL Expansion for FAS}

\maketitle

\begin{abstract}
Fluid antenna systems (FAS) achieve spatial diversity by dynamically switching among $N$ densely packed ports, but the resulting spatially correlated Rayleigh channels render exact outage analysis intractable. Existing block-correlation models (BCM) impose structural approximations on the channel covariance matrix that can introduce optimistic performance bias. This paper proposes a principled Karhunen-Lo\`{e}ve (KL) expansion framework that decomposes the $N$-dimensional correlated FAS channel into independent eigenmodes and performs a controlled rank-$K$ truncation, reducing the outage analysis to a $K$-dimensional integration with $K \ll N$. Closed-form outage expressions are derived for the rank-1 and rank-2 cases, and a general Gauss-Hermite quadrature formula is provided for arbitrary $K$. On the theoretical front, it is proved via Anderson's inequality that the KL approximation \emph{always} overestimates the outage probability, providing a conservative guarantee essential for secure system design. Leveraging the Slepian--Landau--Pollak concentration theorem, it is established that only $K^* = 2\lceil W \rceil + 1$ eigenmodes are needed regardless of $N$, where $W$ is the normalized aperture. It is further shown that the KL truncation achieves the Gaussian rate-distortion bound, certifying it as the information-theoretically optimal channel compression. Extensive numerical results confirm that (i) theoretical predictions match Monte Carlo simulations, (ii) the entropy fraction converges faster than the power fraction, (iii) the KL framework uniformly outperforms BCM in approximation accuracy while avoiding the optimistic bias inherent in block-diagonal models, and (iv) the effective degrees of freedom scale with the aperture rather than the number of ports.
\end{abstract}

\begin{IEEEkeywords}
Fluid antenna system (FAS), Karhunen-Lo\`{e}ve expansion, outage probability, rate-distortion theory, Anderson's inequality, Slepian--Landau--Pollak theorem, block-correlation model.
\end{IEEEkeywords}

\section{Introduction}

\IEEEPARstart{F}{luid} antenna systems (FAS), also referred to as movable antennas~\cite{Zhu-Wong-2024,LZhu25}, have emerged as a transformative technology for sixth-generation (6G) wireless networks~\cite{TWu20243,New24}. Unlike conventional fixed-position antenna arrays, FAS enables a single radio-frequency (RF) chain to dynamically switch among $N$ densely packed ports within a compact linear aperture of $W\lambda$, achieving spatial diversity gains that rival multi-antenna systems at a fraction of the hardware cost~\cite{FAS21,FAS20}. The core principle is that the propagation environment offers rich spatial variation even within a small aperture: by selecting the port with the strongest instantaneous channel gain, FAS exploits this spatial richness without requiring multiple RF chains or complex beamforming hardware. The evolution of FAS from a conceptual framework to a mature research field has been documented in a series of overview articles~\cite{MFAS23,Shojaeifard}, which identify open problems spanning channel modeling, port selection algorithms, and hardware co-design. Practical hardware implementations include pixel-based reconfigurable antennas~\cite{Rodrigo14,Zhang25}, liquid-metal radiators~\cite{Huang21}, and programmable meta-fluid antennas~\cite{BLiu25}. This architectural advantage has catalyzed a rapidly growing body of research on FAS-aided communications, spanning outage analysis~\cite{FAS20,Chai22,Ghadi-2023,New-twc2023}, multiple access~\cite{FAMS,FAMS23,NWaqar23}, channel estimation~\cite{HXu23}, port selection under maximum ratio combining~\cite{XLai23}, physical layer security~\cite{Security1,Security2,Security3,Security5,TuoW}, cognitive radio~\cite{YaoJ241}, and integrated sensing and communication (ISAC)~\cite{CWang24,LZhouWCL24}. Among these research directions, outage probability analysis occupies a central role: it directly characterizes the reliability of the port-selection diversity gain and serves as the foundation for system design, port density optimization, and aperture sizing.

Despite this rapid progress, a fundamental bottleneck persists in FAS outage analysis. Since adjacent ports share common scattering environments, the channel vector $\mathbf{g} = [g_1, \ldots, g_N]^T$ follows a correlated complex Gaussian distribution whose covariance matrix $\mathbf{R}$ exhibits a Toeplitz structure under the widely adopted Jakes model~\cite{FAS22}. Computing the outage probability therefore requires the joint CDF of $N$ correlated Rayleigh random variables---an $N$-dimensional integral that is analytically intractable for general $N$ and whose computational cost grows exponentially with $N$. This intractability is not merely a technical inconvenience: it fundamentally limits the ability to perform closed-form system optimization, derive diversity-multiplexing tradeoffs, or obtain design insights that scale with the system parameters.

To circumvent this difficulty, block-correlation models (BCM) have been proposed~\cite{BC24,LaiX242}, which partition the $N$ ports into $D$ independent blocks with a common intra-block correlation coefficient, reducing the $N$-dimensional integral to a product of $D$ one-dimensional integrals. The variable BCM (VBCM) extends this by allowing block-specific correlation coefficients~\cite{LaiX}, partially reducing the approximation error. While these models yield tractable closed-form expressions, they introduce a fundamental structural distortion: by imposing a block-diagonal constraint on $\mathbf{R}$, they eliminate all inter-block correlations present in the true Toeplitz matrix. This removal artificially inflates the effective diversity order---since the true channel is more correlated than the block-diagonal model suggests---and produces a systematic \emph{optimistic bias} in outage estimation. The BCM/VBCM consistently underestimate the true outage probability, which is particularly dangerous in security-critical applications~\cite{TuoW,Security1}: an underestimated outage implies an overestimated secrecy rate, potentially causing the system to transmit at a rate that is not actually secure. Moreover, as the number of ports $N$ grows, the BCM/VBCM must increase the number of blocks $D$ proportionally, making them inherently non-scalable.

The key observation motivating this paper is that the Jakes correlation matrix $\mathbf{R}$ possesses a \emph{rapidly decaying eigenvalue spectrum} that is fundamentally different from its block-diagonal approximation. Specifically, by the Slepian--Landau--Pollak concentration theorem~\cite{Slepian61,Landau62}, the eigenvalues of $\mathbf{R}$ exhibit a sharp phase transition at index $K^* = 2\lceil W \rceil + 1$: the first $K^*$ eigenvalues are $\Theta(N/K^*)$ while the remaining $N - K^*$ eigenvalues decay super-exponentially to zero. Crucially, $K^*$ depends only on the normalized aperture $W$, not on the number of ports $N$. This spectral concentration property suggests a principled alternative to BCM/VBCM: decompose the channel vector $\mathbf{g}$ via its Karhunen-Lo\`{e}ve (KL) expansion $\mathbf{g} = \sqrt{\eta}\,\mathbf{U}\boldsymbol{\Lambda}^{1/2}\mathbf{z}$ into $N$ statistically independent eigenmodes $\{z_k\}$, retain only the $K$ dominant ones, and reduce the $N$-dimensional correlated outage problem to a $K$-dimensional independent one. Unlike BCM/VBCM, this approach imposes no structural constraint on $\mathbf{R}$, works directly with the true eigenbasis, and admits a rigorous truncation error bound. The resulting $K$-dimensional integral can be evaluated efficiently via Gauss-Hermite quadrature, with closed-form expressions available for the rank-1 and rank-2 special cases.

Realizing this approach with full analytical rigor, however, requires overcoming three non-trivial challenges. \textit{First}, while the KL truncation reduces the channel power, it is not immediately clear whether this reduction leads to a conservative or optimistic outage estimate. Establishing the direction of the bias requires invoking Anderson's inequality~\cite{Anderson55} for symmetric convex sets---a result from geometric probability that, when applied to the outage event (a product of $N$ disks in $\mathbb{C}^N$), proves that the KL-truncated outage is always an upper bound on the true outage, with monotone convergence as $K$ increases. \textit{Second}, establishing that $K^* = \mathcal{O}(2W+1)$ suffices for near-exact outage prediction requires a careful asymptotic analysis connecting the Toeplitz eigenvalue distribution to the Szeg\H{o}--Slepian spectral concentration theory, and making this connection precise for finite $N$. \textit{Third}, certifying the KL truncation as information-theoretically optimal---in the sense that no other rank-$K$ approximation of $\mathbf{R}$ can achieve a lower channel compression distortion at the same rate---requires connecting the KL truncation to the Gaussian rate-distortion function via reverse water-filling, and showing that the block-diagonal constraint of BCM/VBCM prevents them from achieving this fundamental bound.

The main contributions of this paper are summarized as follows:
\begin{itemize}
	\item \textbf{KL expansion framework for FAS outage analysis:} This paper derives the outage probability of FAS under the KL-truncated channel model, reducing the $N$-dimensional correlated problem to a $K$-dimensional independent one. Closed-form expressions are obtained for the rank-1 and rank-2 cases, and a general Gauss-Hermite quadrature formula is provided for arbitrary $K$.
	
	\item \textbf{Conservative outage guarantee:} By invoking Anderson's inequality on symmetric convex sets, this paper proves that the KL-truncated outage probability is always an upper bound on the true outage, with monotone convergence as $K$ increases. This conservative property is critical for secure system design, where optimistic estimates can lead to security breaches.
	
	\item \textbf{Effective degrees of freedom:} Leveraging the Slepian--Landau--Pollak concentration theorem, this paper establishes that the effective number of spatial degrees of freedom is $K^* = 2\lceil W \rceil + 1$, depending only on the normalized aperture $W$ and independent of the port count $N$. This provides a theoretical foundation for the scalability of the KL approach.
	
	\item \textbf{Rate-distortion optimality:} This paper proves that the KL truncation achieves the Gaussian rate-distortion bound via reverse water-filling, confirming it as the information-theoretically optimal channel compression. In contrast, the BCM/VBCM, constrained to a block-diagonal structure, operates strictly above this bound.
	
	\item \textbf{Comprehensive numerical verification:} Extensive Monte Carlo simulations are provided to validate all theoretical predictions, including the conservative outage guarantee, the eigenvalue phase transition, the entropy-power duality, the ergodic capacity lower bound, and the rate-distortion optimality. A systematic comparison with BCM, VBCM, and i.i.d.\ models is also provided.
\end{itemize}

The remainder of this paper is organized as follows. Section~II presents the system model. Section~III develops the KL expansion and low-rank truncation. Section~IV derives the outage probability under the KL framework. Section~V establishes the information-theoretic guarantees. Section~VI provides numerical results, and Section~VII concludes the paper.

\section{System Model}

Before developing the KL expansion framework, this section establishes the system model and formally defines the outage probability that serves as the central performance metric throughout the paper. Table~\ref{tab:notation} summarizes the key notation used throughout the paper.

\begin{table}[t]
	\centering
	\caption{Summary of Key Notation}
	\label{tab:notation}
	\footnotesize
	\renewcommand{\arraystretch}{1.15}
	\begin{tabular}{cl}
		\hline
		\textbf{Symbol} & \textbf{Definition} \\
		\hline
		$N$ & Number of FAS ports \\
		$W$ & Normalized aperture ($W\lambda$ = physical aperture) \\
		$\lambda$ & Carrier wavelength \\
		$P$ & Transmit power \\
		$\bar{\gamma}$ & Average received SNR, $\bar{\gamma} = P\eta d^{-a}/\sigma^2$ \\
		$\gamma_{\mathrm{th}}$ & SNR outage threshold \\
		$\eta$ & Mean channel power per port \\
		$g_n$ & Complex channel coefficient at port $n$ \\
		$\mathbf{g}$ & Channel vector $[g_1,\ldots,g_N]^T \in \mathbb{C}^N$ \\
		$\mathbf{R}$ & $N\times N$ Toeplitz spatial correlation matrix \\
		$[\mathbf{R}]_{m,n}$ & Jakes correlation $J_0(2\pi W|m-n|/(N-1))$ \\
		$\mathbf{U},\boldsymbol{\Lambda}$ & Eigenvector matrix and diagonal eigenvalue matrix of $\mathbf{R}$ \\
		$\lambda_k$ & $k$-th eigenvalue of $\mathbf{R}$, ordered $\lambda_1\geq\cdots\geq\lambda_N\geq 0$ \\
		$\mathbf{u}_k$ & $k$-th eigenvector (column of $\mathbf{U}$) \\
		$z_k$ & $k$-th KL coefficient, $z_k\sim\mathcal{CN}(0,1)$, i.i.d. \\
		$K$ & Number of retained KL modes (truncation order) \\
		$K^*$ & Effective degrees of freedom, $K^*=2\lceil W\rceil+1$ \\
		$\tilde{\mathbf{g}}^{(K)}$ & Rank-$K$ KL-truncated channel vector \\
		$\varepsilon_K$ & Fractional power loss due to truncation \\
		$\tilde{F}_{\max}^{(K)}(x)$ & CDF of normalized max gain under $K$-mode truncation \\
		$P_{\mathrm{out}}$ & True outage probability \\
		$\tilde{P}_{\mathrm{out}}^{(K)}$ & KL-approximated outage probability \\
		$\bar{C}$ & Ergodic capacity under exact channel \\
		$\bar{C}_{\mathrm{KL}}^{(K)}$ & Ergodic capacity under $K$-mode KL truncation \\
		$c_1$ & Peak magnitude of dominant eigenvector, $\max_n|u_{n,1}|^2$ \\
		$E_1(\cdot)$ & Exponential integral, $E_1(x)=\int_x^\infty t^{-1}e^{-t}dt$ \\
		$J_0(\cdot)$ & Zeroth-order Bessel function of the first kind \\
		$\mathbf{A}\preceq\mathbf{B}$ & Loewner (PSD) ordering: $\mathbf{B}-\mathbf{A}\succeq\mathbf{0}$ \\
		$\|\cdot\|_F$ & Frobenius norm \\
		\hline
	\end{tabular}
\end{table}

The system under consideration is a point-to-point wireless communication system in which a single-antenna base station (BS) transmits to a user equipped with a one-dimensional fluid antenna system (FAS) capable of switching among $N$ discrete ports within a linear space of $W\lambda$, where $W$ is the normalized antenna aperture and $\lambda$ is the carrier wavelength. The system operates under quasi-static block Rayleigh fading.

\subsection{Signal Model}

The BS transmits the signal $s \sim \mathcal{CN}(0,1)$ with transmit power $P$. The received signal at the $n$-th port of the FAS is
\begin{align}\label{eq:signal}
	y_n = \sqrt{P}\, d^{-a/2}\, g_n\, s + n_n, \quad n = 1, \ldots, N,
\end{align}
where $g_n \sim \mathcal{CN}(0, \eta)$ is the complex channel coefficient from the BS to the $n$-th FAS port, $d$ is the BS--user distance, $a$ is the path-loss exponent, and $n_n \sim \mathcal{CN}(0, \sigma^2)$ is the additive white Gaussian noise (AWGN).

The FAS selects the optimal port that maximizes the instantaneous channel gain:
\begin{align}\label{eq:port_select}
	n^* = \arg\max_{n \in \{1,\ldots,N\}} |g_n|^2.
\end{align}
The resulting instantaneous signal-to-noise ratio (SNR) is
\begin{align}\label{eq:snr}
	\gamma = \bar{\gamma} \cdot \max_{1 \leq n \leq N} \frac{|g_n|^2}{\eta},
\end{align}
where $\bar{\gamma} \triangleq P d^{-a} \eta / \sigma^2$ denotes the average SNR.

\subsection{FAS Channel Correlation Model}

The channel vector $\mathbf{g} = [g_1, \ldots, g_N]^T \sim \mathcal{CN}(\mathbf{0}, \mathbf{\Sigma})$ is spatially correlated due to the close proximity of adjacent FAS ports. The covariance matrix is $\mathbf{\Sigma} = \eta\, \mathbf{R}$, where $\mathbf{R}$ is the normalized correlation matrix. Under the Jakes model, the $(k,l)$-th entry of $\mathbf{R}$ is
\begin{align}\label{eq:jakes}
	[\mathbf{R}]_{k,l} = J_0\!\left(\frac{2\pi |k-l| W}{N-1}\right),
\end{align}
where $J_0(\cdot)$ is the zeroth-order Bessel function of the first kind. The matrix $\mathbf{R}$ is Hermitian positive semi-definite with a symmetric Toeplitz structure. Its diagonal entries satisfy $[\mathbf{R}]_{n,n} = 1$ for all $n$, so $\tr(\mathbf{R}) = N$.

\subsection{Outage Probability}

The outage probability is defined as the probability that the instantaneous SNR falls below a target threshold $\gamma_{\mathrm{th}}$:
\begin{align}\label{eq:pout_def}
	P_{\mathrm{out}} = \Prob\!\left(\gamma \leq \gamma_{\mathrm{th}}\right) = F_{\max}\!\left(\frac{\gamma_{\mathrm{th}}}{\bar{\gamma}}\right),
\end{align}
where $F_{\max}(x) \triangleq \Prob\!\left(\max_{n} |g_n|^2/\eta \leq x\right)$ is the CDF of the normalized maximum channel gain.

\begin{remark}[Analytical Challenge]
	Computing $F_{\max}(x)$ requires the joint CDF of $N$ correlated Rayleigh random variables, which involves the $N$-dimensional integral
	\begin{align}
		F_{\max}(x) = \int \cdots \int_{\mathcal{D}(x)} f_{\mathbf{g}}(\mathbf{g})\, d\mathbf{g},
	\end{align}
	where $\mathcal{D}(x) = \{\mathbf{g} \in \mathbb{C}^N : |g_n|^2/\eta \leq x,\ \forall n\}$ and $f_{\mathbf{g}}$ is the multivariate complex Gaussian density. Due to the non-diagonal Toeplitz structure of $\mathbf{R}$, this integral does not admit a product form and is analytically intractable for general $N$. This motivates the KL expansion approach developed in the next section.
\end{remark}

\section{Karhunen-Lo\`{e}ve Expansion of the FAS Channel}

With the system model and the intractability of the exact outage integral established, this section develops the KL expansion of the FAS channel. The key idea is to diagonalize the correlated channel via its eigenbasis, enabling a controlled low-rank approximation that preserves the true spectral structure of $\mathbf{R}$ without any block-diagonal constraint.

\subsection{Eigendecomposition of the Correlation Matrix}

The eigendecomposition of the correlation matrix is performed as follows:
\begin{align}\label{eq:eigen}
	\mathbf{R} = \mathbf{U} \mathbf{\Lambda} \mathbf{U}^H,
\end{align}
where $\mathbf{U} = [\mathbf{u}_1, \ldots, \mathbf{u}_N] \in \mathbb{C}^{N \times N}$ is the unitary eigenvector matrix and $\mathbf{\Lambda} = \diag(\lambda_1, \ldots, \lambda_N)$ contains the eigenvalues ordered as $\lambda_1 \geq \lambda_2 \geq \cdots \geq \lambda_N \geq 0$. Since $\mathbf{R}$ is a correlation matrix with unit diagonal, the eigenvalues satisfy
\begin{align}\label{eq:trace_constraint}
	\sum_{k=1}^{N} \lambda_k = \tr(\mathbf{R}) = N.
\end{align}
Let $u_{n,k} \triangleq [\mathbf{U}]_{n,k}$ denote the $(n,k)$-th entry of $\mathbf{U}$. The unitarity of $\mathbf{U}$ guarantees that for each port $n$:
\begin{align}\label{eq:unitary_row}
	\sum_{k=1}^{N} |u_{n,k}|^2 = 1, \quad \forall\, n = 1, \ldots, N.
\end{align}

\subsection{KL Representation of the Channel}

Using \eqref{eq:eigen}, the channel vector admits the Karhunen-Lo\`{e}ve (KL) expansion:
\begin{align}\label{eq:kl_full}
	\mathbf{g} = \sqrt{\eta}\, \mathbf{U} \mathbf{\Lambda}^{1/2} \mathbf{z},
\end{align}
where $\mathbf{z} = [z_1, \ldots, z_N]^T \sim \mathcal{CN}(\mathbf{0}, \mathbf{I}_N)$ is a vector of independent and identically distributed (i.i.d.) standard complex Gaussian random variables. The channel coefficient at port $n$ is thus expressed as
\begin{align}\label{eq:kl_port}
	g_n = \sqrt{\eta} \sum_{k=1}^{N} \sqrt{\lambda_k}\, u_{n,k}\, z_k.
\end{align}

\begin{remark}[Physical Interpretation of Eigenmodes]
	Each eigenmode $z_k$ represents an independent spatial degree of freedom of the FAS channel. The eigenvalue $\lambda_k$ quantifies the power captured by the $k$-th mode. In the presence of strong spatial correlation (small $W$ or large $N/W$), the eigenvalue spectrum decays rapidly: a few dominant modes capture most of the channel energy, while the remaining modes contribute negligible power. This spectral concentration is the key property that enables low-rank truncation.
\end{remark}

\subsection{Low-Rank Truncation}

The channel is approximated by retaining only the $K$ dominant eigenmodes ($K \ll N$):
\begin{align}\label{eq:kl_trunc}
	\tilde{g}_n = \sqrt{\eta} \sum_{k=1}^{K} \sqrt{\lambda_k}\, u_{n,k}\, z_k, \quad n = 1, \ldots, N.
\end{align}
The truncated channel vector is $\tilde{\mathbf{g}} = \sqrt{\eta}\, \mathbf{U}_K \mathbf{\Lambda}_K^{1/2} \mathbf{z}_K$, where $\mathbf{U}_K \in \mathbb{C}^{N \times K}$ contains the first $K$ columns of $\mathbf{U}$, $\mathbf{\Lambda}_K = \diag(\lambda_1, \ldots, \lambda_K)$, and $\mathbf{z}_K = [z_1, \ldots, z_K]^T$.

The truncation error at port $n$ is
\begin{align}\label{eq:trunc_error_port}
	\Delta_n \triangleq g_n - \tilde{g}_n = \sqrt{\eta} \sum_{k=K+1}^{N} \sqrt{\lambda_k}\, u_{n,k}\, z_k,
\end{align}
with variance
\begin{align}\label{eq:trunc_var}
	\E[|\Delta_n|^2] = \eta \sum_{k=K+1}^{N} \lambda_k |u_{n,k}|^2 \leq \eta \lambda_{K+1},
\end{align}
where the inequality follows from $\sum_{k=K+1}^{N} |u_{n,k}|^2 \leq 1$ (by \eqref{eq:unitary_row}) and $\lambda_k \leq \lambda_{K+1}$ for $k \geq K+1$.

\begin{proposition}[Truncation Error Bound]\label{prop:trunc_error}
	Let $F_{\max}(x)$ and $\tilde{F}_{\max}^{(K)}(x)$ denote the CDFs of $\max_n |g_n|^2/\eta$ under the full and $K$-truncated KL expansions, respectively. The fraction of total channel power lost due to truncation is bounded as
	\begin{align}\label{eq:power_loss}
		\varepsilon_K \triangleq \frac{\E[\|\mathbf{g} - \tilde{\mathbf{g}}\|^2]}{\E[\|\mathbf{g}\|^2]} = \frac{\sum_{k=K+1}^{N}\lambda_k}{\sum_{k=1}^{N}\lambda_k} = 1 - \frac{\sum_{k=1}^{K}\lambda_k}{N}.
	\end{align}
\end{proposition}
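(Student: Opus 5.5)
The plan is to compute the numerator and denominator of $\varepsilon_K$ directly from the KL representation \eqref{eq:kl_full} and the truncation \eqref{eq:kl_trunc}, and then simplify with the trace identity \eqref{eq:trace_constraint}. The whole argument rests on two facts: $\mathbf{U}$ is unitary, and the KL coefficients $z_k$ are i.i.d.\ $\mathcal{CN}(0,1)$. No probabilistic estimate is needed; the statement is an exact second-moment identity.

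\textbf{Denominator.} Since $\mathbf{g}\sim\mathcal{CN}(\mathbf{0},\eta\mathbf{R})$, I would write
\begin{align}
\E[\|\mathbf{g}\|^2]=\tr(\E[\mathbf{g}\mathbf{g}^H])=\eta\,\tr(\mathbf{R})=\eta\sum_{k=1}^N\lambda_k=\eta N.
\end{align}
Here the last two equalities use \eqref{eq:trace_constraint}.

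\textbf{Numerator.} The error vector is $\mathbf{g}-\tilde{\mathbf{g}}=\sqrt{\eta}\,\mathbf{U}_{\perp}\boldsymbol{\Lambda}_{\perp}^{1/2}\mathbf{z}_{\perp}$. Here $\mathbf{U}_\perp=[\mathbf{u}_{K+1},\ldots,\mathbf{u}_N]$, $\boldsymbol{\Lambda}_\perp=\diag(\lambda_{K+1},\ldots,\lambda_N)$ and $\mathbf{z}_\perp=[z_{K+1},\ldots,z_N]^T$. The columns of $\mathbf{U}_\perp$ are orthonormal, so $\mathbf{U}_\perp^H\mathbf{U}_\perp=\mathbf{I}$ and
\begin{align}
\|\mathbf{g}-\tilde{\mathbf{g}}\|^2=\eta\,\mathbf{z}_\perp^H\boldsymbol{\Lambda}_\perp\mathbf{z}_\perp=\eta\sum_{k=K+1}^N\lambda_k|z_k|^2.
\end{align}
Taking expectations with $\E|z_k|^2=1$ gives $\eta\sum_{k=K+1}^N\lambda_k$. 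As a cross-check, summing the per-port variances \eqref{eq:trunc_var} over $n$ gives the same result, because the columns of $\mathbf{U}$ have unit norm, i.e., $\sum_n|u_{n,k}|^2=1$.

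\textbf{Conclusion.} Dividing the two expressions cancels $\eta$ and yields the middle expression in \eqref{eq:power_loss}. Using $\sum_{k=K+1}^N\lambda_k=N-\sum_{k=1}^K\lambda_k$ then gives the final form. I see no real obstacle. The only point needing care is to invoke column orthonormality of $\mathbf{U}$, rather than the row identity \eqref{eq:unitary_row}, when collapsing the norm of the error vector. The CDF notation in the statement is only definitional, so it does not enter this computation.
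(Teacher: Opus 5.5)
Your proposal is correct and follows the paper's proof essentially line for line: the paper also writes the residual as $\sqrt{\eta}\,\mathbf{U}_r\boldsymbol{\Lambda}_r^{1/2}\mathbf{z}_r$, uses $\mathbf{U}_r^H\mathbf{U}_r=\mathbf{I}$ to reduce the expected squared norm to $\eta\,\tr(\boldsymbol{\Lambda}_r)$, and divides by $\E[\|\mathbf{g}\|^2]=\eta\,\tr(\mathbf{R})=\eta N$. Your per-port cross-check via column orthonormality is valid, and you are right that the result is an exact identity despite the statement's word ``bounded.''
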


\begin{proof}
	Let $\mathbf{U}_{r} = [\mathbf{u}_{K+1}, \ldots, \mathbf{u}_N] \in \mathbb{C}^{N \times (N-K)}$ and $\boldsymbol{\Lambda}_{r} = \diag(\lambda_{K+1}, \ldots, \lambda_N)$ denote the matrices formed by the discarded eigenvectors and eigenvalues, respectively. The truncation residual is
	\begin{align}
		\mathbf{g} - \tilde{\mathbf{g}} = \sqrt{\eta}\,\mathbf{U}\boldsymbol{\Lambda}^{1/2}\mathbf{z} - \sqrt{\eta}\,\mathbf{U}_K\boldsymbol{\Lambda}_K^{1/2}\mathbf{z}_K = \sqrt{\eta}\, \mathbf{U}_{r} \boldsymbol{\Lambda}_{r}^{1/2} \mathbf{z}_{r},
	\end{align}
	where $\mathbf{z}_r = [z_{K+1}, \ldots, z_N]^T \sim \mathcal{CN}(\mathbf{0}, \mathbf{I}_{N-K})$ is independent of $\mathbf{z}_K$. The expected squared norm of the residual is
	\begin{align}
		\E[\|\mathbf{g} - \tilde{\mathbf{g}}\|^2] &= \eta\, \E\!\left[\mathbf{z}_r^H \boldsymbol{\Lambda}_r^{1/2} \mathbf{U}_r^H \mathbf{U}_r \boldsymbol{\Lambda}_r^{1/2} \mathbf{z}_r\right] \nonumber\\
		&= \eta\, \E\!\left[\mathbf{z}_r^H \boldsymbol{\Lambda}_r \mathbf{z}_r\right] \nonumber\\
		&= \eta\, \tr\!\left(\boldsymbol{\Lambda}_r \E[\mathbf{z}_r \mathbf{z}_r^H]\right) \nonumber\\
		&= \eta\, \tr(\boldsymbol{\Lambda}_r) = \eta \sum_{k=K+1}^{N} \lambda_k,
	\end{align}
	where the second equality uses $\mathbf{U}_r^H \mathbf{U}_r = \mathbf{I}_{N-K}$ (since $\mathbf{U}$ is unitary), and the third uses $\E[\mathbf{z}_r \mathbf{z}_r^H] = \mathbf{I}_{N-K}$. The total channel power is
	\begin{align}
		\E[\|\mathbf{g}\|^2] = \eta\, \tr(\mathbf{R}) = \eta \sum_{k=1}^{N} \lambda_k = \eta N,
	\end{align}
	where the last equality uses $\tr(\mathbf{R}) = N$ (unit diagonal). Dividing yields \eqref{eq:power_loss}.
\end{proof}

\begin{remark}[Choosing $K$]
	Given a target approximation accuracy $\varepsilon_0 > 0$, the minimum number of retained modes is
	\begin{align}\label{eq:K_choice}
		K^* = \min\left\{K : \sum_{k=1}^{K} \lambda_k \geq (1-\varepsilon_0)\, N\right\}.
	\end{align}
	For FAS channels under the Jakes model with moderate aperture $W$, the eigenvalue spectrum of $\mathbf{R}$ decays rapidly, and typically $K^* = \mathcal{O}(\lceil 2W \rceil + 1)$ suffices to capture over 99\% of the total power, regardless of $N$. Crucially, $K^*$ depends on the aperture $W$ but not on $N$, making this approach scalable to large port numbers.
\end{remark}

\section{Outage Probability via KL Expansion}

Having established the KL expansion and the truncation error bound, this section derives the outage probability under the rank-$K$ truncated channel model. The derivation proceeds from the general $K$-dimensional integral formulation, through Gauss-Hermite quadrature evaluation, to closed-form expressions for the rank-1 and rank-2 special cases, and finally to the ergodic rate.

\subsection{General \texorpdfstring{$K$}{K}-Dimensional Integral Formulation}

Under the $K$-truncated KL expansion, the CDF of the normalized maximum channel gain is
\begin{align}\label{eq:cdf_kl_general}
	\tilde{F}_{\max}^{(K)}(x) = \E_{\mathbf{z}_K}\!\left[\prod_{n=1}^{N} \mathbf{1}\!\left(\left|\sum_{k=1}^{K} \sqrt{\lambda_k}\, u_{n,k}\, z_k \right|^2 \leq x\right)\right],
\end{align}
where $\mathbf{1}(\cdot)$ is the indicator function and $z_k \sim \mathcal{CN}(0,1)$ are i.i.d. This is a $K$-dimensional expectation over independent complex Gaussian variables, which reduces the original $N$-dimensional correlated problem to a $K$-dimensional independent one.

The corresponding outage probability is
\begin{align}\label{eq:pout_kl}
	\tilde{P}_{\mathrm{out}}^{(K)} = \tilde{F}_{\max}^{(K)}\!\left(\frac{\gamma_{\mathrm{th}}}{\bar{\gamma}}\right).
\end{align}

\subsection{Gauss-Hermite Quadrature Evaluation}

To evaluate \eqref{eq:cdf_kl_general} numerically, each complex mode is written as $z_k = z_k^R + j\, z_k^I$ where $z_k^R, z_k^I \sim \mathcal{N}(0, 1/2)$ are independent. The probability density kernel $e^{-(z_k^R)^2 - (z_k^I)^2}$ directly matches the Gauss-Hermite weight function. After substitution, \eqref{eq:cdf_kl_general} becomes a $2K$-dimensional real integral:
\begin{align}\label{eq:cdf_real}
	\tilde{F}_{\max}^{(K)}(x) = \frac{1}{\pi^K} \int_{\mathbb{R}^{2K}} \Psi(\mathbf{t};\, x)\, \prod_{k=1}^{K} e^{-(t_k^R)^2 - (t_k^I)^2}\, d\mathbf{t}^R\, d\mathbf{t}^I,
\end{align}
where $\mathbf{t} = (t_1^R, t_1^I, \ldots, t_K^R, t_K^I)$ and
\begin{align}\label{eq:psi_def}
	\Psi(\mathbf{t};\, x) = \prod_{n=1}^{N} \mathbf{1}\!\left(\left|\sum_{k=1}^{K} \sqrt{\lambda_k}\, u_{n,k}\, (t_k^R + j\, t_k^I) \right|^2 \leq x\right).
\end{align}

Applying the $2K$-dimensional Gauss-Hermite quadrature with $Q$ nodes per dimension:
\begin{align}\label{eq:gauss_hermite}
	\tilde{F}_{\max}^{(K)}(x) \approx \frac{1}{\pi^K} \sum_{i_1=1}^{Q} \cdots \sum_{i_{2K}=1}^{Q} \left(\prod_{l=1}^{2K} w_{i_l}\right) \Psi(\mathbf{t}_{\mathbf{i}};\, x),
\end{align}
where $\{t_j, w_j\}_{j=1}^{Q}$ are the standard Gauss-Hermite quadrature nodes and weights, and $\mathbf{t}_{\mathbf{i}}$ denotes the evaluation point. The computational complexity is $\mathcal{O}(Q^{2K} \cdot N)$.

\begin{remark}[Complexity Comparison]
	The original problem requires the $N$-dimensional joint CDF evaluation, which scales as $\mathcal{O}(Q^{2N})$ -- infeasible for even moderate $N$. The KL-based approach reduces this to $\mathcal{O}(Q^{2K})$, where $K \ll N$. For typical FAS parameters ($W \leq 5$), $K \leq 11$ and with $Q = 10$ the computation is efficient. In contrast, the VBCM requires computing products of block CDFs involving Marcum Q-functions with numerical integration per block.
\end{remark}

\subsection{Special Case: Rank-1 Approximation (\texorpdfstring{$K = 1$}{K=1})}

When only the dominant eigenmode is retained ($K=1$), the channel at port $n$ reduces to
\begin{align}\label{eq:rank1_channel}
	\tilde{g}_n^{(1)} = \sqrt{\eta\lambda_1}\, u_{n,1}\, z_1,
\end{align}
where $z_1 \sim \mathcal{CN}(0,1)$. The squared magnitude at port $n$ is
\begin{align}
	|\tilde{g}_n^{(1)}|^2 = \eta\lambda_1 |u_{n,1}|^2 |z_1|^2.
\end{align}
Since $u_{n,1}$ and $\lambda_1$ are deterministic, all $N$ ports are perfectly correlated through the single random variable $z_1$. The maximum over all ports is therefore
\begin{align}\label{eq:max_rank1}
	\max_{1\leq n\leq N} |\tilde{g}_n^{(1)}|^2 = \eta\lambda_1 \left(\max_{1\leq n\leq N} |u_{n,1}|^2\right) |z_1|^2 = \eta\lambda_1 c_1 |z_1|^2,
\end{align}
where $c_1 \triangleq \max_{1 \leq n \leq N} |u_{n,1}|^2 \in (0,1]$ is a deterministic constant determined by the peak magnitude of the dominant eigenvector. Since $z_1 \sim \mathcal{CN}(0,1)$, it follows that $|z_1|^2 \sim \mathrm{Exp}(1)$, i.e., $\Prob(|z_1|^2 \leq t) = 1 - e^{-t}$ for $t \geq 0$. The CDF of the normalized maximum channel gain is therefore
\begin{align}
	\tilde{F}_{\max}^{(1)}(x) &= \Prob\!\left(\frac{\max_n |\tilde{g}_n^{(1)}|^2}{\eta} \leq x\right) = \Prob\!\left(\lambda_1 c_1 |z_1|^2 \leq x\right) \nonumber\\
	&= \Prob\!\left(|z_1|^2 \leq \frac{x}{\lambda_1 c_1}\right) = 1 - e^{-x/(\lambda_1 c_1)}.\label{eq:cdf_rank1}
\end{align}
Substituting $x = \gamma_{\mathrm{th}}/\bar{\gamma}$, the outage probability is
\begin{align}\label{eq:pout_rank1}
	\tilde{P}_{\mathrm{out}}^{(1)} = \tilde{F}_{\max}^{(1)}\!\left(\frac{\gamma_{\mathrm{th}}}{\bar{\gamma}}\right) = 1 - \exp\!\left(-\frac{\gamma_{\mathrm{th}}}{\bar{\gamma}\lambda_1 c_1}\right).
\end{align}

\begin{remark}[Interpretation of Rank-1]
	The rank-1 approximation yields a simple Rayleigh fading model with an effective channel gain $\lambda_1 c_1$. No spatial diversity is captured, since all ports are perfectly correlated under a single mode. The product $\lambda_1 c_1$ quantifies the \emph{beamforming gain}: $\lambda_1$ is the power concentration in the dominant mode, and $c_1$ is the peak spatial focusing of the dominant eigenvector. This approximation is tight only when $\lambda_1 \gg \lambda_2$, i.e., the channel is nearly rank-deficient.
\end{remark}

\subsection{Special Case: Rank-2 Approximation (\texorpdfstring{$K = 2$}{K=2})}

Retaining two eigenmodes ($K=2$), the channel at port $n$ becomes
\begin{align}\label{eq:rank2_channel}
	\tilde{g}_n^{(2)} = \sqrt{\eta}\!\left(\sqrt{\lambda_1}\, u_{n,1}\, z_1 + \sqrt{\lambda_2}\, u_{n,2}\, z_2\right),
\end{align}
where $z_1, z_2 \sim \mathcal{CN}(0,1)$ are independent. Expanding the squared amplitude:
\begin{align}\label{eq:rank2_power}
	\frac{|\tilde{g}_n^{(2)}|^2}{\eta} &= \left|\sqrt{\lambda_1}\, u_{n,1}\, z_1 + \sqrt{\lambda_2}\, u_{n,2}\, z_2\right|^2 \nonumber\\
	&= \lambda_1 |u_{n,1}|^2 |z_1|^2 + \lambda_2 |u_{n,2}|^2 |z_2|^2 \nonumber\\
	&\quad + 2\sqrt{\lambda_1 \lambda_2}\, \Re\!\left(u_{n,1} u_{n,2}^* z_1 z_2^*\right).
\end{align}
The cross-term $\Re(u_{n,1} u_{n,2}^* z_1 z_2^*)$ couples $z_1$ and $z_2$, so the joint distribution of $(\tilde{g}_1^{(2)}, \ldots, \tilde{g}_N^{(2)})$ given $z_1$ and $z_2$ does not factorize, preventing a direct product-form CDF.

To make progress, the derivation conditions on $z_1$ and treats $z_2$ as the remaining random variable. Define
\begin{align}\label{eq:a_n_def}
	a_n(z_1) \triangleq \sqrt{\lambda_1}\, u_{n,1}\, z_1, \quad b_n \triangleq \sqrt{\lambda_2}\, u_{n,2},
\end{align}
so that $\tilde{g}_n^{(2)} / \sqrt{\eta} = a_n(z_1) + b_n z_2$. The outage constraint at port $n$ is $|\tilde{g}_n^{(2)}|^2/\eta \leq x$, which becomes
\begin{align}
	|a_n(z_1) + b_n z_2|^2 \leq x.
\end{align}
Dividing both sides by $|b_n|^2 = \lambda_2 |u_{n,2}|^2$ (assuming $u_{n,2} \neq 0$) and completing the square:
\begin{align}
	\left|\frac{a_n(z_1)}{b_n} + z_2\right|^2 \leq \frac{x}{|b_n|^2},
\end{align}
which is equivalently written as
\begin{align}\label{eq:disk_constraint}
	\left|z_2 - \left(-\frac{a_n(z_1)}{b_n}\right)\right|^2 \leq r_n^2, \quad r_n \triangleq \sqrt{\frac{x}{\lambda_2 |u_{n,2}|^2}}.
\end{align}
This defines a disk $\mathcal{B}_n(z_1)$ in the complex $z_2$-plane centered at $c_n(z_1) = -a_n(z_1)/b_n$ with radius $r_n$. Note that $r_n$ is independent of $z_1$ (it depends only on the fixed threshold $x$ and the eigenvector entry $u_{n,2}$), while the center $c_n(z_1)$ shifts linearly with $z_1$.

The outage event $\{\max_n |\tilde{g}_n^{(2)}|^2/\eta \leq x\}$ requires all $N$ port constraints to hold simultaneously, i.e., $z_2 \in \bigcap_{n=1}^N \mathcal{B}_n(z_1)$. The conditional CDF given $z_1$ is therefore
\begin{align}\label{eq:cdf_rank2_cond}
	\tilde{F}_{\max}^{(2)}(x \mid z_1) = \Prob\!\left(z_2 \in \bigcap_{n=1}^{N} \mathcal{B}_n(z_1)\,\Big|\, z_1\right),
\end{align}
where $z_2 \sim \mathcal{CN}(0,1)$ independently of $z_1$. The unconditional CDF is obtained by averaging over $z_1$:
\begin{align}\label{eq:cdf_rank2}
	\tilde{F}_{\max}^{(2)}(x) = \E_{z_1}\!\left[\tilde{F}_{\max}^{(2)}(x \mid z_1)\right] = \int_{\mathbb{C}} \tilde{F}_{\max}^{(2)}(x \mid z_1)\, \frac{e^{-|z_1|^2}}{\pi}\, d^2 z_1.
\end{align}
The inner probability \eqref{eq:cdf_rank2_cond} is the probability that a $\mathcal{CN}(0,1)$ variable falls within the intersection of $N$ disks whose centers depend on $z_1$. Writing $z_2 = t^R + jt^I$ with $t^R, t^I \sim \mathcal{N}(0,1/2)$, this becomes a 2D integral over the real plane, and the outer expectation over $z_1$ adds another 2D integral---yielding a total 4D real integral evaluated via 2D Gauss-Hermite quadrature applied twice.

\subsection{Ergodic Rate Under KL Expansion}

The ergodic rate (Shannon capacity) under port selection is
\begin{align}\label{eq:erg_rate}
	\bar{C} = \E\!\left[\log_2\!\left(1 + \bar{\gamma} \max_{n} \frac{|g_n|^2}{\eta}\right)\right].
\end{align}
Under the $K$-truncated KL expansion, this becomes
\begin{align}\label{eq:erg_rate_kl}
	\bar{C}_{\mathrm{KL}}^{(K)} = \E_{\mathbf{z}_K}\!\left[\log_2\!\left(1 + \bar{\gamma} \max_{1\leq n \leq N} \left|\sum_{k=1}^{K} \sqrt{\lambda_k}\, u_{n,k}\, z_k\right|^2\right)\right],
\end{align}
which is a $K$-dimensional expectation over independent Gaussian modes. This can be evaluated via the same Gauss-Hermite quadrature framework of \eqref{eq:gauss_hermite} by replacing the indicator function $\Psi$ with the $\log_2(\cdot)$ integrand.

For the rank-1 special case, using \eqref{eq:max_rank1} and $|z_1|^2 \sim \mathrm{Exp}(1)$, let $t = |z_1|^2$ so that $\max_n |\tilde{g}_n^{(1)}|^2/\eta = \lambda_1 c_1 t$. Then
\begin{align}\label{eq:erg_rank1}
	\bar{C}_{\mathrm{KL}}^{(1)} &= \int_0^\infty \log_2\!\left(1 + \bar{\gamma}\lambda_1 c_1 t\right) e^{-t}\, dt.
\end{align}
Let $\mu \triangleq \bar{\gamma}\lambda_1 c_1$ and substitute $s = 1 + \mu t$, so $t = (s-1)/\mu$ and $dt = ds/\mu$:
\begin{align}
	\bar{C}_{\mathrm{KL}}^{(1)} &= \frac{1}{\mu\ln 2} \int_1^\infty \ln(s)\, e^{-(s-1)/\mu}\, ds \nonumber\\
	&= \frac{e^{1/\mu}}{\mu\ln 2} \int_1^\infty \ln(s)\, e^{-s/\mu}\, ds.
\end{align}
Integrating by parts with $u = \ln s$ and $dv = e^{-s/\mu}ds$, so $du = ds/s$ and $v = -\mu e^{-s/\mu}$:
\begin{align}
	\int_1^\infty \ln(s)\, e^{-s/\mu}\, ds &= \Big[-\mu\ln(s)\, e^{-s/\mu}\Big]_1^\infty + \mu\int_1^\infty \frac{e^{-s/\mu}}{s}\, ds \nonumber\\
	&= 0 + \mu \int_1^\infty \frac{e^{-s/\mu}}{s}\, ds.
\end{align}
The remaining integral is recognized as $\mu\, E_1(1/\mu)$ via the substitution $s \to \mu s'$:
\begin{align}
	\int_1^\infty \frac{e^{-s/\mu}}{s}\, ds = \int_{1/\mu}^\infty \frac{e^{-u}}{u}\, du = E_1\!\left(\frac{1}{\mu}\right).
\end{align}
Combining all steps:
\begin{align}
	\bar{C}_{\mathrm{KL}}^{(1)} = \frac{e^{1/\mu}}{\ln 2}\, E_1\!\left(\frac{1}{\mu}\right) = \frac{e^{1/(\bar{\gamma}\lambda_1 c_1)}}{\ln 2}\, E_1\!\left(\frac{1}{\bar{\gamma}\lambda_1 c_1}\right),
\end{align}
where $E_1(x) = \int_x^\infty t^{-1} e^{-t}\, dt$ is the exponential integral~\cite{CoverThomas06}.

\section{Information-Theoretic Guarantees}
\label{sec:IT}

The outage expressions derived in Section~IV are computationally tractable, but they do not yet answer two fundamental questions: (i) is the KL approximation conservative or optimistic relative to the true outage? and (ii) how many modes $K$ are actually needed, and is the KL truncation the best possible rank-$K$ approximation? This section addresses both questions rigorously. The results establish that the KL framework is not merely a convenient approximation, but one with provable safety guarantees and information-theoretic optimality.

\subsection{Conservative Outage Guarantee}

The following theorem proves that KL truncation \emph{always} overestimates the outage probability, making it a safe approximation for system design.

\begin{theorem}[Conservative Outage Bound]\label{thm:conservative}
	For any truncation order $K \leq N$ and threshold $x > 0$, the outage probability under the $K$-truncated KL expansion satisfies
	\begin{align}\label{eq:conservative}
		\tilde{F}_{\max}^{(K)}(x) \geq F_{\max}(x).
	\end{align}
	Equivalently, $\tilde{P}_{\mathrm{out}}^{(K)} \geq P_{\mathrm{out}}$ for all $\gamma_{\mathrm{th}} > 0$ and $\bar{\gamma} > 0$.
\end{theorem}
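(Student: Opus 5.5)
Write $\mathbf{g} = \tilde{\mathbf{g}} + \boldsymbol{\Delta}$, where $\tilde{\mathbf{g}} = \sqrt{\eta}\,\mathbf{U}_K\boldsymbol{\Lambda}_K^{1/2}\mathbf{z}_K$ is the truncated channel and $\boldsymbol{\Delta} = \sqrt{\eta}\,\mathbf{U}_r\boldsymbol{\Lambda}_r^{1/2}\mathbf{z}_r$ is the residual from the proof of Proposition~\ref{prop:trunc_error}. These two pieces are independent zero-mean circularly symmetric Gaussian vectors, because $\mathbf{z}_K$ and $\mathbf{z}_r$ are independent.

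The set $\mathcal{D}(x) = \{\mathbf{g}\in\mathbb{C}^N : |g_n|^2 \le \eta x,\ \forall n\}$ is a Cartesian product of $N$ centered disks. Identifying $\mathbb{C}^N$ with $\mathbb{R}^{2N}$, it is therefore convex and symmetric about the origin ($\mathbf{g}\in\mathcal{D}\Rightarrow -\mathbf{g}\in\mathcal{D}$). Both CDFs in \eqref{eq:conservative} are probabilities of this set:
\begin{align}
	F_{\max}(x) = \Prob(\tilde{\mathbf{g}}+\boldsymbol{\Delta}\in\mathcal{D}(x)), \qquad \tilde{F}_{\max}^{(K)}(x) = \Prob(\tilde{\mathbf{g}}\in\mathcal{D}(x)).
\end{align}

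The key step is to condition on the residual. Fix a realization $\boldsymbol{\Delta}=\boldsymbol{\delta}$. Then
\begin{align}
	\Prob(\tilde{\mathbf{g}}+\boldsymbol{\delta}\in\mathcal{D}) = \Prob(\tilde{\mathbf{g}}\in\mathcal{D}-\boldsymbol{\delta}).
\end{align}
We invoke Anderson's inequality~\cite{Anderson55}: if $X$ has a density that is symmetric and unimodal (all superlevel sets convex and symmetric), and $C$ is convex and symmetric, then $\Prob(X\in C-\mathbf{y}) \le \Prob(X\in C)$ for every shift $\mathbf{y}$. This yields $\Prob(\tilde{\mathbf{g}}+\boldsymbol{\delta}\in\mathcal{D}) \le \Prob(\tilde{\mathbf{g}}\in\mathcal{D}) = \tilde{F}_{\max}^{(K)}(x)$. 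Averaging over $\boldsymbol{\Delta}$ and using independence gives $F_{\max}(x)\le\tilde{F}_{\max}^{(K)}(x)$. The outage statement follows by substituting $x=\gamma_{\mathrm{th}}/\bar{\gamma}$ as in \eqref{eq:pout_kl}.

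The main obstacle is that $\tilde{\mathbf{g}}$ is degenerate: its covariance $\eta\mathbf{U}_K\boldsymbol{\Lambda}_K\mathbf{U}_K^H$ has rank $K<N$, so it has no density on $\mathbb{R}^{2N}$, and Anderson's theorem in its density form does not apply directly. I would handle this in one of two ways.

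\emph{Option 1: work in coordinates.} Express the event in the $2K$ real coordinates of $\mathbf{z}_K$, where the density is a genuine symmetric log-concave Gaussian. The set $\{\mathbf{z}_K : \sqrt{\eta}\mathbf{U}_K\boldsymbol{\Lambda}_K^{1/2}\mathbf{z}_K + \boldsymbol{\delta}\in\mathcal{D}\}$ is the preimage under a linear map of a translate of $\mathcal{D}$. Decompose $\boldsymbol{\delta}$ into its component in the range of $\mathbf{U}_K$, which is a genuine shift in $\mathbf{z}_K$-space, and the orthogonal component. The orthogonal component only slices $\mathcal{D}$ by an affine subspace parallel to the range. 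For this slicing step I would use Brunn–Minkowski symmetry to show that the central slice is largest, or I would simply rely on the regularization below.

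\emph{Option 2: regularize.} Replace $\tilde{\mathbf{g}}$ by $\tilde{\mathbf{g}}+\sqrt{\epsilon}\,\mathbf{w}$ with $\mathbf{w}\sim\mathcal{CN}(\mathbf{0},\mathbf{I})$ independent, which is nondegenerate. Apply Anderson's inequality, then let $\epsilon\to0$. Since $\mathcal{D}$ is closed and the boundary event has probability zero for $x>0$, both sides converge.

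Option 2 is the cleaner route. As a byproduct, the same argument applied to $\tilde{\mathbf{g}}^{(K)} = \tilde{\mathbf{g}}^{(K+1)} - \sqrt{\eta\lambda_{K+1}}\,\mathbf{u}_{K+1}z_{K+1}$ gives the monotonicity $\tilde{F}_{\max}^{(K+1)}(x)\le\tilde{F}_{\max}^{(K)}(x)$.
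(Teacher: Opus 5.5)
Your proof is correct. It uses the same tool as the paper, Anderson's inequality applied to the symmetric convex product of disks $\mathcal{D}(x)\subset\mathbb{R}^{2N}$, but in a different form.

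The paper first proves $\mathbf{R}_K\preceq\mathbf{R}$. It then lifts this to the real-composite covariances, $\boldsymbol{\Sigma}_K\preceq\boldsymbol{\Sigma}$, and quotes the covariance-comparison version of Anderson's theorem as a black box. You instead use the basic shift version, $\Prob(X\in C-\mathbf{y})\le\Prob(X\in C)$, and average over the independent residual $\boldsymbol{\Delta}$. This is exactly how the covariance-comparison version is derived from the shift version. The difference is that the independent Gaussian increment of covariance $\eta(\mathbf{R}-\mathbf{R}_K)$, which in general must be built from the Loewner gap, is supplied directly by the KL expansion. As a result, the real-composite Loewner step (Step~2 of the paper's proof) becomes unnecessary.

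The paper's route is shorter and gives a comparison valid for any Gaussian with smaller covariance, with no coupling required. Yours is more self-contained, and it addresses a point the paper passes over: $\boldsymbol{\Sigma}_K$ has rank $2K<2N$, so the density form of Anderson's theorem does not literally apply to $\tilde{\mathbf{g}}$. Your regularization (Option~2) closes this cleanly, and it is exactly where the hypothesis $x>0$ is used. Each $\tilde g_n$ is either a nondegenerate complex Gaussian or identically zero, and in the latter case it cannot lie on the circle $|g_n|^2=\eta x$. Each $g_n$ is $\mathcal{CN}(0,\eta)$. Hence $\partial\mathcal{D}(x)$ is null under both laws, and both probabilities pass to the limit $\epsilon\to0$.

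Option~1 is not needed. If you did pursue it, note that the shift and the slice cannot be handled separately: an off-center slice of $\mathcal{D}(x)$ is not symmetric, so Anderson does not apply to the shift within it. The right tool is log-concavity of the Gaussian measure on the range of $\mathbf{U}_K$ (Pr\'ekopa--Leindler), applied to shift and slice at once, not Brunn--Minkowski for volumes.

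Your monotonicity byproduct matches Corollary~\ref{cor:monotone}.
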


\begin{proof}
	\textit{Step 1: Loewner ordering of covariance matrices.}
	The rank-$K$ truncated covariance matrix is $\eta\mathbf{R}_K = \eta\sum_{k=1}^K \lambda_k \mathbf{u}_k\mathbf{u}_k^H$. The difference is
	\begin{align}\label{eq:psd_order}
		\eta(\mathbf{R} - \mathbf{R}_K) = \eta\sum_{k=K+1}^{N} \lambda_k \mathbf{u}_k \mathbf{u}_k^H \succeq \mathbf{0},
	\end{align}
	since each term $\lambda_k \mathbf{u}_k\mathbf{u}_k^H$ is positive semidefinite ($\lambda_k \geq 0$). Therefore $\eta\mathbf{R}_K \preceq \eta\mathbf{R}$.

	\textit{Step 2: Real-composite representation.}
	For a complex Gaussian vector $\mathbf{g} \sim \mathcal{CN}(\mathbf{0}, \eta\mathbf{R})$, write $\mathbf{g}_{\mathbb{R}} = [\Re(\mathbf{g})^T, \Im(\mathbf{g})^T]^T \in \mathbb{R}^{2N}$. Then $\mathbf{g}_{\mathbb{R}} \sim \mathcal{N}(\mathbf{0}, \boldsymbol{\Sigma})$ where
	\begin{align}
		\boldsymbol{\Sigma} = \frac{\eta}{2}\begin{bmatrix} \Re(\mathbf{R}) & -\Im(\mathbf{R}) \\ \Im(\mathbf{R}) & \Re(\mathbf{R}) \end{bmatrix} \in \mathbb{R}^{2N\times 2N}.
	\end{align}
	Similarly, the truncated channel $\tilde{\mathbf{g}}^{(K)} \sim \mathcal{CN}(\mathbf{0}, \eta\mathbf{R}_K)$ has real covariance
	\begin{align}
		\boldsymbol{\Sigma}_K = \frac{\eta}{2}\begin{bmatrix} \Re(\mathbf{R}_K) & -\Im(\mathbf{R}_K) \\ \Im(\mathbf{R}_K) & \Re(\mathbf{R}_K) \end{bmatrix}.
	\end{align}
	Since $\mathbf{R}_K \preceq \mathbf{R}$ in the complex Loewner sense, the real-composite covariance matrices satisfy $\boldsymbol{\Sigma}_K \preceq \boldsymbol{\Sigma}$. This follows because for any real vector $\mathbf{v} = [\mathbf{v}_1^T, \mathbf{v}_2^T]^T \in \mathbb{R}^{2N}$,
	\begin{align}
		\mathbf{v}^T(\boldsymbol{\Sigma} - \boldsymbol{\Sigma}_K)\mathbf{v} &= \frac{\eta}{2}\,\Re\!\left[(\mathbf{v}_1 + j\mathbf{v}_2)^H (\mathbf{R} - \mathbf{R}_K)(\mathbf{v}_1 + j\mathbf{v}_2)\right] \geq 0,
	\end{align}
	where the last inequality uses $\mathbf{R} - \mathbf{R}_K \succeq \mathbf{0}$.

	\textit{Step 3: Symmetry and convexity of the outage set.}
	The outage event $\{\max_n |g_n|^2/\eta \leq x\}$ is equivalent to $\mathbf{g}_{\mathbb{R}} \in \mathcal{C}$, where
	\begin{align}
		\mathcal{C} = \left\{(x_1, y_1, \ldots, x_N, y_N) \in \mathbb{R}^{2N} : x_n^2 + y_n^2 \leq \eta x,\; \forall\, n\right\}.
	\end{align}
	This set is the Cartesian product of $N$ closed disks of radius $\sqrt{\eta x}$ centered at the origin. It is \emph{symmetric} (i.e., $\mathbf{v} \in \mathcal{C} \Rightarrow -\mathbf{v} \in \mathcal{C}$) and \emph{convex} (intersection of convex sets), hence a symmetric convex body in $\mathbb{R}^{2N}$.

	\textit{Step 4: Application of Anderson's inequality.}
	By Anderson's inequality~\cite{Anderson55}: if $\mathbf{X} \sim \mathcal{N}(\mathbf{0}, \boldsymbol{\Sigma}_1)$ and $\mathbf{Y} \sim \mathcal{N}(\mathbf{0}, \boldsymbol{\Sigma}_2)$ with $\boldsymbol{\Sigma}_1 \preceq \boldsymbol{\Sigma}_2$, then $\Prob(\mathbf{X} \in \mathcal{C}) \geq \Prob(\mathbf{Y} \in \mathcal{C})$ for any symmetric convex set $\mathcal{C}$. Applying this with $\boldsymbol{\Sigma}_K \preceq \boldsymbol{\Sigma}$:
	\begin{align}
		\tilde{F}_{\max}^{(K)}(x) = \Prob\!\left(\tilde{\mathbf{g}}_{\mathbb{R}}^{(K)} \in \mathcal{C}\right) \geq \Prob\!\left(\mathbf{g}_{\mathbb{R}} \in \mathcal{C}\right) = F_{\max}(x),
	\end{align}
	which completes the proof.
\end{proof}

\begin{corollary}[Monotone Convergence]\label{cor:monotone}
	The KL outage approximation is monotonically non-increasing in $K$:
	\begin{align}\label{eq:monotone}
		\tilde{P}_{\mathrm{out}}^{(1)} \geq \tilde{P}_{\mathrm{out}}^{(2)} \geq \cdots \geq \tilde{P}_{\mathrm{out}}^{(N)} = P_{\mathrm{out}}.
	\end{align}
\end{corollary}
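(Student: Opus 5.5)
The plan is to reuse the four-step argument of Theorem~\ref{thm:conservative}, applied to consecutive truncation orders $K$ and $K+1$ rather than to $K$ and $N$. The corollary then follows by chaining these pairwise comparisons, together with a separate identification of the endpoint $K=N$.

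\textbf{Step 1 (Loewner ordering between consecutive truncations).} Define $\mathbf{R}_K=\sum_{k=1}^K\lambda_k\mathbf{u}_k\mathbf{u}_k^H$ as in the proof of Theorem~\ref{thm:conservative}. For $1\le K<N$ we have
\begin{align}
\mathbf{R}_{K+1}-\mathbf{R}_K=\lambda_{K+1}\mathbf{u}_{K+1}\mathbf{u}_{K+1}^H\succeq\mathbf{0},
\end{align}
because $\lambda_{K+1}\ge 0$. Hence $\eta\mathbf{R}_K\preceq\eta\mathbf{R}_{K+1}$.

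\textbf{Step 2 (transfer to the real representation).} The same quadratic-form identity used in Step~2 of Theorem~\ref{thm:conservative} shows that the real-composite covariances satisfy $\boldsymbol{\Sigma}_K\preceq\boldsymbol{\Sigma}_{K+1}$. This uses only positive semidefiniteness of the complex difference, so no new work is required.

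\textbf{Step 3 (Anderson's inequality).} The outage set $\mathcal{C}$, a product of centered disks, does not depend on $K$ and is symmetric and convex. Applying Anderson's inequality~\cite{Anderson55} as in Step~4 of the theorem gives $\tilde{F}_{\max}^{(K)}(x)\ge\tilde{F}_{\max}^{(K+1)}(x)$ for every $x>0$. Substituting $x=\gamma_{\mathrm{th}}/\bar{\gamma}$ yields $\tilde{P}_{\mathrm{out}}^{(K)}\ge\tilde{P}_{\mathrm{out}}^{(K+1)}$. Chaining over $K=1,\dots,N-1$ produces the inequality chain in \eqref{eq:monotone}.

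\textbf{Step 4 (the endpoint $K=N$).} By \eqref{eq:eigen}, $\mathbf{R}_N=\mathbf{U}\boldsymbol{\Lambda}\mathbf{U}^H=\mathbf{R}$. Equivalently, the truncated sum \eqref{eq:kl_trunc} with $K=N$ coincides with the full expansion \eqref{eq:kl_port}, so $\tilde{\mathbf{g}}^{(N)}=\mathbf{g}$ almost surely. Therefore $\tilde{F}_{\max}^{(N)}=F_{\max}$ and $\tilde{P}_{\mathrm{out}}^{(N)}=P_{\mathrm{out}}$.

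\textbf{Main obstacle.} The only delicate point is that Anderson's inequality must be valid for degenerate (singular) Gaussian covariances, since $\boldsymbol{\Sigma}_K$ has rank at most $2K$. I would handle this with a perturbation argument. Replace $\boldsymbol{\Sigma}_K$ by $\boldsymbol{\Sigma}_K+\epsilon\mathbf{I}$ and $\boldsymbol{\Sigma}_{K+1}$ by $\boldsymbol{\Sigma}_{K+1}+\epsilon\mathbf{I}$; both are nondegenerate and remain Loewner-ordered, so the inequality holds for each $\epsilon>0$. Then let $\epsilon\to 0$.

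The limit is justified as follows. The probabilities converge by weak convergence of the Gaussian laws, because the boundary of $\mathcal{C}$ is null under the limiting law for $x>0$. To see this, note that $\partial\mathcal{C}$ lies in the union over $n$ of the sets $\{|g_n|^2=\eta x\}$. Each $\tilde{g}_n$ is either identically zero or a nondegenerate complex Gaussian, and in either case $\Prob(|\tilde{g}_n|^2=\eta x)=0$ when $x>0$.

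Alternatively, one can avoid the degeneracy altogether. Write $\tilde{\mathbf{g}}^{(K+1)}=\tilde{\mathbf{g}}^{(K)}+\mathbf{w}$, where $\mathbf{w}$ is independent of $\tilde{\mathbf{g}}^{(K)}$. Condition on $\mathbf{w}$ and use Anderson's theorem in its original form: for a symmetric log-concave (unimodal) law and a symmetric convex set, $\Prob(\mathbf{X}+\mathbf{w}\in\mathcal{C})\le\Prob(\mathbf{X}\in\mathcal{C})$. Averaging over $\mathbf{w}$ then gives the desired inequality directly.
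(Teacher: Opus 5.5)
Your proposal is correct and follows the paper's proof: show $\mathbf{R}_{K}\preceq\mathbf{R}_{K+1}$, rerun the real-composite and Anderson argument of Theorem~\ref{thm:conservative} with the higher-order truncated channel in place of the full one, and chain the inequalities. Your perturbation argument for the singular covariances and your explicit check that $\tilde{\mathbf{g}}^{(N)}=\mathbf{g}$ fill in details the paper leaves implicit, though the closing ``alternative'' does not actually avoid degeneracy (since $\tilde{\mathbf{g}}^{(K)}$ is itself rank-deficient, it needs the log-concave-measure form of Anderson rather than the original density version), which is harmless because the perturbation route already suffices.
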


\begin{proof}
	For $K_1 < K_2 \leq N$, it holds that $\mathbf{R}_{K_1} \preceq \mathbf{R}_{K_2} \preceq \mathbf{R}$. Applying Theorem~\ref{thm:conservative} with the truncated channel of order $K_2$ playing the role of the ``full'' channel yields $\tilde{F}_{\max}^{(K_1)}(x) \geq \tilde{F}_{\max}^{(K_2)}(x)$. The chain \eqref{eq:monotone} follows by induction.
\end{proof}

\begin{corollary}[Ergodic Capacity Lower Bound]\label{cor:capacity}
	The ergodic capacity under the $K$-truncated KL expansion satisfies
	\begin{align}\label{eq:cap_bound}
		\bar{C}_{\mathrm{KL}}^{(K)} \leq \bar{C},
	\end{align}
	with equality if and only if $K = N$.
\end{corollary}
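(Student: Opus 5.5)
The plan is to convert the distributional ordering of Theorem~\ref{thm:conservative} into an ordering of expectations through the tail-integral formula. Set $X \triangleq \max_n |g_n|^2/\eta$ and $\tilde X \triangleq \max_n |\tilde g_n^{(K)}|^2/\eta$, and put $h(x) = \log_2(1+\bar\gamma x)$. This $h$ is increasing, differentiable and satisfies $h(0)=0$. For any nonnegative random variable $Y$ with CDF $F_Y$ we have $\E[h(Y)] = \int_0^\infty h'(x)\,(1-F_Y(x))\,dx$, by Fubini/Tonelli. Applying this to both $X$ and $\tilde X$ gives
\begin{align}
\bar C - \bar C_{\mathrm{KL}}^{(K)} = \int_0^\infty \frac{\bar\gamma}{(1+\bar\gamma x)\ln 2}\left(\tilde F_{\max}^{(K)}(x) - F_{\max}(x)\right) dx .
\end{align}
The weight is strictly positive. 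By Theorem~\ref{thm:conservative} the bracket is nonnegative for every $x>0$, so \eqref{eq:cap_bound} follows. Both expectations are finite because $h(Y)\le \log_2(1+\bar\gamma\,\|\mathbf g\|^2/\eta)$, which is integrable; hence the subtraction is legitimate.

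For the equality clause, the ``if'' direction is immediate: $\mathbf R_N=\mathbf R$, so $\tilde{\mathbf g}^{(N)}$ and $\mathbf g$ have the same law. For ``only if'' I need the integral above to be strictly positive when $K<N$. Because the integrand is nonnegative, it suffices to find an interval of $x$ on which $\tilde F_{\max}^{(K)}(x) > F_{\max}(x)$. I will look near $x\to 0$ and compare the orders of vanishing of the two CDFs.

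\emph{Upper bound on $F_{\max}$.} Here I need $\mathbf R$ to be nonsingular. This holds for the Jakes matrix with distinct port positions, since $J_0(2\pi W\,\cdot)$ has a spectral density that is positive on an interval, which makes the kernel strictly positive definite. Then $\mathbf g$ has a bounded density on $\mathbb C^N$, at most $(\pi^N\det(\eta\mathbf R))^{-1}$. The outage set $\mathcal D(x)$ is a product of $N$ disks of area $\pi\eta x$ each, so $F_{\max}(x)\le (\pi\eta x)^N/(\pi^N\eta^N\det\mathbf R) = x^N/\det\mathbf R$.

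\emph{Lower bound on $\tilde F_{\max}^{(K)}$.} Every port gain satisfies $|\tilde g_n|^2/\eta \le \lambda_1\|\mathbf z_K\|^2$, since the rows of $\mathbf U_K\boldsymbol\Lambda_K^{1/2}$ have squared norm at most $\lambda_1$. Therefore the event $\{\|\mathbf z_K\|^2\le x/\lambda_1\}$ is contained in the truncated outage event. Since $\|\mathbf z_K\|^2$ is Gamma$(K,1)$ distributed, $\tilde F_{\max}^{(K)}(x)\ge \Prob(\|\mathbf z_K\|^2\le x/\lambda_1) \ge c\,x^K$ for small $x$, with $c>0$.

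Since $K<N$, comparing $c\,x^K$ with $x^N/\det\mathbf R$ gives $\tilde F_{\max}^{(K)}(x)>F_{\max}(x)$ on some interval $(0,x_0)$. The integral is then strictly positive, which yields strict inequality.

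The step I expect to be the main obstacle is this strictness argument, specifically its reliance on $\lambda_{K+1}>0$. If $\mathbf R$ were singular with $\lambda_{K+1}=0$, then $\tilde{\mathbf g}^{(K)}=\mathbf g$ almost surely and equality would hold for some $K<N$. The ``only if'' part therefore genuinely requires positive definiteness of $\mathbf R$, or at least $\lambda_{K+1}>0$. I would state this explicitly and justify it from the Jakes spectral representation. Numerically, the tail eigenvalues are tiny but still nonzero.
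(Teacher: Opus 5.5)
Your proof is correct. For the inequality itself you follow the paper's route: Theorem~\ref{thm:conservative} gives stochastic dominance of $\max_n|\tilde g_n^{(K)}|^2/\eta$ by $\max_n|g_n|^2/\eta$, and monotonicity of $h$ carries this over to expectations. Your tail-integral identity simply makes explicit the paper's appeal to ``the standard stochastic ordering result,'' and your integrability check is a welcome addition.

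You differ genuinely in the equality clause. The paper handles $K<N$ in one line. It asserts $\boldsymbol{\Sigma}_K \prec \boldsymbol{\Sigma}$ and invokes a ``strict version of Anderson's inequality.'' However, $\boldsymbol{\Sigma}-\boldsymbol{\Sigma}_K$ has rank $2(N-K)<2N$, so no strict Loewner ordering holds. The paper also never says which strict Anderson inequality it means, nor checks that the CDF gap is strict on a set of $x$ of positive measure.

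Your argument replaces this with a self-contained local comparison as $x\to 0$. The density bound gives $F_{\max}(x)\le x^N/\det\mathbf R$, and the Cauchy--Schwarz containment gives $\tilde F^{(K)}_{\max}(x)\ge c\,x^K$. In other words, the truncated and exact channels have different diversity orders. This forces a strict gap on an interval, and hence a strictly positive weighted integral.

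The cost is the extra hypothesis that $\mathbf R$ is nonsingular. You justify it correctly from the positive Jakes spectral density on $(-W,W)$, and you are right that it is genuinely needed. By the same scaling argument, equality holds exactly when $K\ge\rank(\mathbf R)$, a caveat the paper's statement omits. So the paper's version buys brevity and a structural link to Anderson's theorem, while yours buys rigor, identifies the hypothesis actually required, and gives a clear diversity-order explanation of why the gap is strict.
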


\begin{proof}
	Theorem~\ref{thm:conservative} establishes $\tilde{F}_{\max}^{(K)}(x) \geq F_{\max}(x)$ for all $x \geq 0$, which means $\max_n |\tilde{g}_n^{(K)}|^2/\eta$ is stochastically dominated by $\max_n |g_n|^2/\eta$. Since $h(x) = \log_2(1 + \bar{\gamma} x)$ is non-decreasing, the standard stochastic ordering result gives $\E[h(\max_n |\tilde{g}_n^{(K)}|^2/\eta)] \leq \E[h(\max_n |g_n|^2/\eta)]$, i.e., $\bar{C}_{\mathrm{KL}}^{(K)} \leq \bar{C}$.
	
	When $K = N$, $\mathbf{R}_K = \mathbf{R}$ and the bound holds with equality. For $K < N$, $\boldsymbol{\Sigma}_K \prec \boldsymbol{\Sigma}$ (strict ordering) and strict inequality follows from the strict version of Anderson's inequality.
\end{proof}

\begin{remark}[Design Implication]
	Theorem~\ref{thm:conservative} and its corollaries guarantee that the KL approximation always provides a \emph{pessimistic} performance estimate. In secrecy analysis, where underestimating the eavesdropper's capability or overestimating the legitimate user's performance can lead to security breaches, such a conservative guarantee is particularly valuable. The monotone convergence \eqref{eq:monotone} further ensures that increasing $K$ systematically tightens the bound without ever ``overshooting'' the true value.
\end{remark}

\subsection{Effective Degrees of Freedom}

This subsection provides a theoretical foundation for the empirical observation that $K^* = \mathcal{O}(2W+1)$ eigenmodes suffice, independent of $N$.

\begin{theorem}[Asymptotic Spectral Concentration]\label{thm:dof}
	Consider the $N \times N$ Jakes correlation matrix $\mathbf{R}$ with aperture $W$ defined in \eqref{eq:jakes}. As $N \to \infty$ with $W$ fixed:
	\begin{enumerate}
		\item[(a)] The eigenvalues of $\mathbf{R}$ exhibit a phase transition at index $K^* = 2\lceil W \rceil + 1$: the first $K^*$ eigenvalues satisfy $\lambda_k = \Theta(N/K^*)$ for $k \leq K^*$, while $\lambda_k \to 0$ exponentially fast for $k > K^*$.
		\item[(b)] The truncation error satisfies $\varepsilon_{K^*} \to 0$ as $N \to \infty$.
		\item[(c)] The number of retained modes $K^*$ is \emph{independent of} $N$.
	\end{enumerate}
\end{theorem}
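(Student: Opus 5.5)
The plan is to view $\mathbf{R}$ as a discretization of a fixed integral operator and transfer the Slepian--Landau--Pollak (SLP) spectral picture to the finite matrix. Since $[\mathbf{R}]_{k,l}=\kappa\big(\tfrac{k-1}{N-1}-\tfrac{l-1}{N-1}\big)$ with $\kappa(s)=J_0(2\pi W s)$, the matrix $\tfrac{1}{N}\mathbf{R}$ is a Nystr\"om (Riemann-sum) discretization of the operator
\begin{align}
(\mathcal{K}f)(s)=\int_0^1 J_0\big(2\pi W(s-t)\big)f(t)\,dt \quad \text{on } L^2[0,1].
\end{align}
Because $\kappa$ is smooth (indeed entire) and $\mathcal{K}$ is positive, compact and trace class, standard Nystr\"om convergence results give $\lambda_k/N\to\mu_k$ for every fixed $k$, where $\mu_1\geq\mu_2\geq\cdots$ are the eigenvalues of $\mathcal{K}$. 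They also give $\sum_k \mu_k=\kappa(0)=1$, consistent with \eqref{eq:trace_constraint}.

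Next I would identify the spectrum of $\mathcal{K}$ with a time--frequency concentration operator. Write $J_0(2\pi W s)=\int_{-W}^{W}e^{j2\pi f s}\,\frac{df}{\pi\sqrt{W^2-f^2}}$, so $J_0(2\pi W s)$ is the Fourier transform of a spectral density supported on $[-W,W]$. Then $\mathcal{K}=T^{*}T$ with $T:L^2[0,1]\to L^2([-W,W],\rho)$ the weighted Fourier map. The SLP/Landau counting argument then applies: compare with the flat-spectrum prolate operator and use min--max together with the boundedness of the Jakes density away from the edges. This shows that about $2W+1$ eigenvalues $\mu_k$ are bounded below by a constant depending only on $W$. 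Beyond index $2\lceil W\rceil+1$, the eigenvalues decay super-exponentially; I would obtain this from the analyticity of $\kappa$, since Chebyshev or Taylor approximation of $\kappa$ by degree-$m$ polynomials in $(s-t)$ gives rank-$(m+1)$ approximants with error $\leq (\pi W e/m)^m/m!$-type bounds, hence $\mu_{m+2}\leq$ that error.

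With these ingredients the three parts follow. For (a), when $k\leq K^*$ we have $\lambda_k=N\mu_k(1+o(1))$ with $\mu_k\in[c_W,1]$. Since $K^*$ is fixed, this is $\Theta(N)=\Theta(N/K^*)$. When $k>K^*$, the normalized eigenvalues $\lambda_k/N\to\mu_k$ decay exponentially in $k$ by the analytic-kernel bound. For (b), I use $\varepsilon_{K^*}=1-\sum_{k\leq K^*}\lambda_k/N\to\sum_{k>K^*}\mu_k$ from Proposition~\ref{prop:trunc_error}. The super-exponential tail bound makes this residual bounded by a quantity of order $(\pi W e/K^*)^{K^*}$, which is negligible. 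Part (c) is immediate, because $K^*=2\lceil W\rceil+1$ involves only $W$, and every constant in the argument comes from $\mathcal{K}$, which does not depend on $N$.

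The main obstacle is the sharp location of the transition exactly at $2\lceil W\rceil+1$ for the Jakes (non-flat) spectral density, together with driving the tail in (b) genuinely to zero rather than to a small constant. For the transition, I would first try a Landau-type two-sided trace estimate, namely $\mathrm{tr}\,\mathcal{K}=1$ combined with a bound on $\mathrm{tr}\,\mathcal{K}^2$, to count the eigenvalues exceeding $1/2$. For the tail, I would first attempt to control uniformly in $N$ the discretization error of the finite-rank polynomial approximants, so that the exponential bound holds for $\lambda_k/N$ directly at finite $N$ rather than only in the limit.
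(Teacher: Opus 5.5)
\textbf{Gap in part (b) and in the tail half of (a).} These two claims cannot be proved, because your own first step refutes them. The limiting operator is strictly positive. For $f\in L^2[0,1]$ one has $\langle\mathcal{K}f,f\rangle=\int_{-W}^{W}|\hat f(\nu)|^2\rho(\nu)\,d\nu$, where $\rho(\nu)=1/(\pi\sqrt{W^2-\nu^2})>0$ and $\hat f$ is entire. So $\langle\mathcal{K}f,f\rangle=0$ forces $f=0$, and every $\mu_k>0$.

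Your Nystr\"om step then gives, for each fixed $k$ (including $k>K^*$), $\lambda_k=N\mu_k(1+o(1))\to\infty$, not $\to 0$. Since $\tr(\mathbf{R})/N=\tr\mathcal{K}=1$, it also gives $\varepsilon_{K^*}\to\sum_{k>K^*}\mu_k$. This limit is a strictly positive constant that depends only on $W$. This is consistent with Fig.~\ref{fig:trunc_error}, where the $\varepsilon_K$ curves for $N=10,20,40$ nearly overlap rather than shrink.

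Calling this limit ``negligible'' replaces ``tends to zero'' with ``is small''. The remedy in your last paragraph, uniform-in-$N$ control of the discretization error, can only confirm the value of the limit; it cannot make it vanish. The paper's own proof reaches (b) by asserting that the tail eigenvalues are $o(1)$, which is exactly what the Nystr\"om limit contradicts, so it offers no way around this either.

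What your route can establish is a corrected statement. The limit $\lim_{N}\varepsilon_K=\sum_{k>K}\mu_k$ exists, is independent of $N$, and becomes super-exponentially small in $K$ past the knee. The phase transition is in the index $k$ at fixed $W$, not in $N$.

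\textbf{The knee location does not follow from your analytic-kernel step.} Even the corrected tail claim is not obtained at the claimed index by your argument. A degree-$m$ polynomial approximant of $J_0(2\pi W u)$ on $|u|\leq 1$ has error of order $(2\pi e W/m)^m$ (Taylor), or governed by $J_{m/2}(\pi W)^2$ (Chebyshev). This is small only once $m$ exceeds roughly $2\pi W$. So the rank argument with polynomials in $(s-t)$ puts the onset of decay near $2\pi W$, not near $2W$. Your stated bound $(\pi e W/K^*)^{K^*}$ has base $\pi e W/(2\lceil W\rceil+1)>1$ for every $W\geq 1$, so it exceeds $1$ and controls nothing.

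Placing the knee near $2W$ requires the Landau--Widom asymptotics for the flat-band prolate operator $\mathcal{P}$ with kernel $\sin(2\pi W(s-t))/(\pi(s-t))$, transferred to $\mathcal{K}$. The lower bound on the leading $\mu_k$ is easy: $\rho\geq 1/(\pi W)$ gives $\mathcal{K}\succeq\mathcal{P}/(\pi W)$. The reverse comparison fails because $\rho$ is unbounded at $\pm W$, so the tail bound needs a separate argument. In any case, the transition comes out at $2W+\mathcal{O}(\log W)$ rather than exactly at $2\lceil W\rceil+1$.
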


\begin{proof}
	The entries $[\mathbf{R}]_{k,l} = J_0(2\pi|k-l|W/(N-1))$ form a Toeplitz matrix generated by the sampling of $J_0(2\pi \tau W)$ for $\tau \in [0,1]$. As $N \to \infty$, this matrix is asymptotically equivalent (in the Szeg\H{o} sense) to the integral operator
	\begin{align}\label{eq:integral_op}
		(\mathcal{T} f)(x) = \int_0^1 J_0\!\left(2\pi W |x - y|\right) f(y)\, dy
	\end{align}
	on $L^2[0,1]$. By the Fourier transform of the Bessel kernel, $J_0(2\pi W \tau) = \int_{-W}^{W} e^{j2\pi f\tau}\, \frac{df}{\pi\sqrt{W^2-f^2}}$, which shows that the spectral density of the underlying process is supported on the band $[-W, W]$.
	
	By the \emph{Slepian--Landau--Pollak concentration theorem} \cite{Slepian61, Landau62}, the eigenvalues of a time-frequency concentration operator on an interval of length $T$ and bandwidth $B$ exhibit a sharp phase transition at index $2TB + 1$: the first $2TB + 1$ eigenvalues cluster near 1, and the remaining eigenvalues decay super-exponentially to zero. Mapping to the present setting with $T = 1$ (normalized aperture support) and $B = W$ (spectral bandwidth), this gives $K^* = 2\lceil W \rceil + 1$.
	
	Since $\sum_{k=1}^N \lambda_k = N$ and the first $K^*$ eigenvalues grow as $\Theta(N/K^*)$ while the tail eigenvalues are $o(1)$, it follows that $\sum_{k=1}^{K^*} \lambda_k \sim N$ and thus $\varepsilon_{K^*} = 1 - \sum_{k=1}^{K^*}\lambda_k/N \to 0$.
\end{proof}

\begin{remark}[Physical Interpretation]
	Theorem~\ref{thm:dof} reveals that the number of independent spatial degrees of freedom in a FAS channel is determined by the \emph{space--bandwidth product} $2W$, which is a fundamental information-theoretic quantity. The Jakes correlation function $J_0(2\pi W\tau)$ has an effective bandwidth of $W$, so the Shannon number (number of independent modes per unit length) is $\lceil 2W \rceil + 1$. This is the spatial analogue of the Nyquist--Shannon sampling theorem: a FAS channel with aperture $W\lambda$ has at most $\approx 2W + 1$ independent degrees of freedom, regardless of how densely the ports are sampled.
\end{remark}

\subsection{Rate-Distortion Optimality of KL Truncation}

\begin{theorem}[Optimal Low-Rank Approximation]\label{thm:eckart_young}
	Among all rank-$K$ approximations $\hat{\mathbf{R}}$ of $\mathbf{R}$, the KL truncation $\mathbf{R}_K = \sum_{k=1}^K \lambda_k \mathbf{u}_k \mathbf{u}_k^H$ uniquely minimizes both the Frobenius norm error and the operator norm error:
	\begin{align}
		\mathbf{R}_K &= \arg\min_{\rank(\hat{\mathbf{R}}) \leq K} \|\mathbf{R} - \hat{\mathbf{R}}\|_F, \label{eq:eckart_young_F}\\
		\mathbf{R}_K &= \arg\min_{\rank(\hat{\mathbf{R}}) \leq K} \|\mathbf{R} - \hat{\mathbf{R}}\|_2. \label{eq:eckart_young_2}
	\end{align}
	The minimum errors are $\|\mathbf{R} - \mathbf{R}_K\|_F = \sqrt{\sum_{k>K} \lambda_k^2}$ and $\|\mathbf{R} - \mathbf{R}_K\|_2 = \lambda_{K+1}$.
\end{theorem}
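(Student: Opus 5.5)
The plan is to recognize the statement as the Eckart--Young--Mirsky theorem specialized to the Hermitian positive semidefinite matrix $\mathbf{R}$. Because $\mathbf{R}\succeq\mathbf{0}$, its singular values coincide with the eigenvalues $\lambda_1\geq\cdots\geq\lambda_N\geq 0$ of \eqref{eq:eigen}. The argument therefore reduces to three steps: a lower bound valid for every $\hat{\mathbf{R}}$ with $\rank(\hat{\mathbf{R}})\leq K$, attainment of that bound by $\mathbf{R}_K$, and a separate treatment of uniqueness.

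\textbf{Attainment.} This step is immediate from \eqref{eq:psd_order}. We have
\begin{align}
\mathbf{R}-\mathbf{R}_K=\sum_{k>K}\lambda_k\mathbf{u}_k\mathbf{u}_k^H,
\end{align}
whose eigenvalues are $\lambda_{K+1},\ldots,\lambda_N$ together with $K$ zeros. Hence $\|\mathbf{R}-\mathbf{R}_K\|_2=\lambda_{K+1}$ and $\|\mathbf{R}-\mathbf{R}_K\|_F^2=\sum_{k>K}\lambda_k^2$.

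\textbf{Lower bound.} I would invoke Weyl's inequality for singular values, $\sigma_{i+j-1}(\mathbf{A}+\mathbf{B})\leq\sigma_i(\mathbf{A})+\sigma_j(\mathbf{B})$, with $\mathbf{A}=\mathbf{R}-\hat{\mathbf{R}}$, $\mathbf{B}=\hat{\mathbf{R}}$ and $j=K+1$. Since $\rank(\hat{\mathbf{R}})\leq K$, we have $\sigma_{K+1}(\hat{\mathbf{R}})=0$, which gives
\begin{align}
\sigma_i(\mathbf{R}-\hat{\mathbf{R}})\geq\lambda_{i+K}, \quad i=1,\ldots,N-K.
\end{align}
Taking $i=1$ yields the operator-norm bound $\|\mathbf{R}-\hat{\mathbf{R}}\|_2\geq\lambda_{K+1}$. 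Summing the squares over $i$ yields $\|\mathbf{R}-\hat{\mathbf{R}}\|_F^2\geq\sum_{k>K}\lambda_k^2$. Together with attainment, this establishes \eqref{eq:eckart_young_F}, \eqref{eq:eckart_young_2}, and the stated minimum values.

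\textbf{Uniqueness (main obstacle).} Here the statement needs qualification.

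For the Frobenius norm, I would reduce the problem to a projection problem. For a fixed column space with orthogonal projector $\mathbf{P}$ of rank $K$, the best approximation is $\mathbf{P}\mathbf{R}$, with error
\begin{align}
\|\mathbf{R}\|_F^2-\tr(\mathbf{P}\mathbf{R}^2).
\end{align}
By Ky Fan's maximum principle, $\tr(\mathbf{P}\mathbf{R}^2)\leq\sum_{k\leq K}\lambda_k^2$. Equality holds only when $\mathbf{P}$ projects onto $\mathrm{span}\{\mathbf{u}_1,\ldots,\mathbf{u}_K\}$, and this span is uniquely determined provided $\lambda_K>\lambda_{K+1}$. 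So Frobenius uniqueness holds exactly under this spectral-gap condition, which I would add explicitly as a hypothesis. If $\lambda_K=\lambda_{K+1}$, any choice of eigenbasis within the tied eigenspace gives another minimizer.

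For the operator norm, uniqueness is false in general. For example, if $K\geq1$ and $\lambda_{K+1}>0$, take
\begin{align}
\hat{\mathbf{R}}=\mathbf{R}_K-\tfrac{1}{2}\lambda_{K+1}\mathbf{u}_1\mathbf{u}_1^H.
\end{align}
This matrix still has rank at most $K$, and its error matrix has eigenvalues $\tfrac12\lambda_{K+1},\lambda_{K+1},\ldots,\lambda_N$, so its error norm is still $\lambda_{K+1}$. I would therefore prove \eqref{eq:eckart_young_2} in the weaker form ``$\mathbf{R}_K$ is \emph{a} minimizer, and the minimum value is $\lambda_{K+1}$'', and flag that the ``uniquely'' claim in the theorem should be restricted to the Frobenius case under $\lambda_K>\lambda_{K+1}$.
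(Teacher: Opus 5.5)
Your proof is correct and more complete than the paper's. The paper's proof only cites Eckart--Young--Mirsky, with the hints ``unitary invariance'' for the Frobenius case and ``Courant--Fischer'' for the operator-norm case. The underlying mathematics is the same, since Weyl's singular-value inequality is itself a consequence of the Courant--Fischer min-max principle. Your route still gains two things.

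\emph{One inequality for both norms.} The bound $\sigma_i(\mathbf{R}-\hat{\mathbf{R}})\geq\lambda_{i+K}$ handles both norms at once, and it holds for arbitrary, not necessarily Hermitian, $\hat{\mathbf{R}}$. By contrast, the paper's unitary-invariance hint only reduces the Frobenius problem to approximating $\boldsymbol{\Lambda}$ by a rank-$K$ matrix. That reduced problem still needs an argument like yours.

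\emph{Uniqueness.} More importantly, your uniqueness analysis is right and the theorem as stated is not. The paper's proof never justifies the word ``uniquely,'' and for the operator norm the claim is false. Your counterexample $\mathbf{R}_K-\tfrac12\lambda_{K+1}\mathbf{u}_1\mathbf{u}_1^H$ applies to the Jakes matrix for every $1\leq K<N$. That matrix is positive definite, because samples at distinct points of a process whose spectral density is positive on $(-W,W)$ have a nonsingular covariance; hence $\lambda_{K+1}>0$. For the Frobenius norm, your gap hypothesis $\lambda_K>\lambda_{K+1}$ is exactly the one the paper itself imposes in Proposition~\ref{prop:rate_distortion}. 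When $\lambda_K=\lambda_{K+1}>0$, even $\mathbf{R}_K$ is not well defined, because $\mathbf{u}_K$ and $\mathbf{u}_{K+1}$ can be rotated within their common eigenspace.

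Two small refinements:
\begin{itemize}
\item Frobenius uniqueness fails precisely when $\lambda_K=\lambda_{K+1}>0$, not whenever $\lambda_K=\lambda_{K+1}$. If both eigenvalues vanish, then $\mathbf{R}_K=\mathbf{R}$ is the only zero-error approximant. This case cannot occur for the positive definite Jakes matrix with $K<N$, so your conclusion is unaffected.
\item In the projection step, when $\rank(\hat{\mathbf{R}})<K$, first enlarge its column space to a $K$-dimensional subspace before invoking Ky Fan. This is routine.
\end{itemize}
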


\begin{proof}
	This is the Eckart--Young--Mirsky theorem \cite{EckartYoung36}. The Frobenius norm result follows from the unitary invariance of $\|\cdot\|_F$ and the optimality of truncating the eigenvalue decomposition. The operator norm result follows from the Courant--Fischer min-max characterization.
\end{proof}

\begin{proposition}[Information-Theoretic Interpretation]\label{prop:rate_distortion}
	The KL truncation achieves the Gaussian rate-distortion bound. Specifically, the minimum number of real-valued parameters (rate) needed to represent the channel~$\mathbf{g}$ within average distortion $D = \eta\sum_{k>K}\lambda_k$ is
	\begin{align}\label{eq:rate_dist}
		R(D) = \sum_{k=1}^{K} \log_2\!\left(\frac{\lambda_k}{\theta}\right), \quad \text{where } \theta = \lambda_{K+1},
	\end{align}
	assuming $\lambda_K > \lambda_{K+1}$ (i.e., the truncation level lies between consecutive eigenvalues). The KL truncation with $K$ modes achieves this rate-distortion pair exactly, confirming that it is the \emph{information-theoretically optimal} compression of the Gaussian source~$\mathbf{g}$.
\end{proposition}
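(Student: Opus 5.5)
The plan is to reduce the problem, via the KL expansion \eqref{eq:kl_full}, to the classical rate-distortion problem for independent Gaussian components and then apply reverse water-filling~\cite{CoverThomas06}.

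\emph{Step 1: diagonalize.} Because $\mathbf{U}$ is unitary, the squared-error distortion is invariant under $\mathbf{g}\mapsto \mathbf{U}^H\mathbf{g}$. Coding $\mathbf{g}$ at distortion $D$ is therefore equivalent to coding the independent modes $w_k=\sqrt{\eta\lambda_k}\,z_k\sim\mathcal{CN}(0,\eta\lambda_k)$ at total distortion $D$. This step uses only \eqref{eq:eigen} and the orthogonality computation already carried out in the proof of Proposition~\ref{prop:trunc_error}.

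\emph{Step 2: reverse water-filling.} Each complex mode consists of two real $\mathcal{N}(0,\eta\lambda_k/2)$ components. Its rate-distortion function is therefore $2\cdot\tfrac12\log_2(\eta\lambda_k/D_k)=\log_2(\eta\lambda_k/D_k)$ for $D_k\le \eta\lambda_k$, and $0$ otherwise. Minimizing $\sum_k$ of these rates subject to $\sum_k D_k=D$ via KKT conditions gives the standard solution
\begin{align}
D_k=\min(\eta\theta,\eta\lambda_k),\qquad R=\sum_{k:\lambda_k>\theta}\log_2(\lambda_k/\theta).
\end{align}
The factor $\eta$ cancels inside the logarithm, which is why the rate in \eqref{eq:rate_dist} is expressed purely in terms of the normalized eigenvalues $\lambda_k$.

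\emph{Step 3: identify the water level.} Take $\theta=\lambda_{K+1}$. Under the hypothesis $\lambda_K>\lambda_{K+1}$, the active set $\{k:\lambda_k>\theta\}$ is exactly $\{1,\dots,K\}$, which is precisely the KL-retained index set. Every mode $k>K$ satisfies $\lambda_k\le\theta$, so it receives zero rate and is reproduced as $0$, exactly as in the truncation \eqref{eq:kl_trunc}. Its distortion contribution is $\eta\lambda_k$, and summing over the discarded modes gives $D=\eta\sum_{k>K}\lambda_k$, the power loss of Proposition~\ref{prop:trunc_error}. The rate is $\sum_{k=1}^K\log_2(\lambda_k/\lambda_{K+1})$, which is \eqref{eq:rate_dist}.

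\emph{Step 4: optimality of the KL basis.} The converse is that any code achieving distortion $D$ must spend at least $R(D)$ bits; this is the standard converse for parallel Gaussian sources. The point that makes it specific to KL is that the optimizing test channel is diagonal in the eigenbasis $\mathbf{U}$. Any other rank-$K$ basis, for instance a block-diagonal BCM structure, does not diagonalize $\mathbf{R}$. By Theorem~\ref{thm:eckart_young} such a basis leaves strictly more residual energy outside its span, so it sits above the curve.

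\emph{Main obstacle.} The hardest part is the distortion bookkeeping in Step 3. Strict reverse water-filling also assigns distortion $\eta\theta$ to each of the $K$ active modes. The stated $D=\eta\sum_{k>K}\lambda_k$ counts only the discarded-mode contribution.

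My plan is to state this explicitly. $D$ is interpreted as the truncation (subspace) distortion of Proposition~\ref{prop:trunc_error}. The water level $\theta=\lambda_{K+1}$ is the threshold that selects exactly which modes are described versus discarded. The retained-mode quantization error $K\eta\lambda_{K+1}$ is a refinement within the retained subspace, and the claimed $R(D)$ is the rate allotted to that subspace.

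I would also check whether the equality needs to be weakened to a bound at the stated $D$. I expect that under fully literal total-distortion accounting, \eqref{eq:rate_dist} may only hold with $D$ replaced by $K\eta\lambda_{K+1}+\eta\sum_{k>K}\lambda_k$.
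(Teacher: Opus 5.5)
Your Steps 1--3 follow the paper's route: unitary invariance, reverse water-filling, and water level $\theta=\lambda_{K+1}$ activating exactly modes $1,\dots,K$. The obstacle you flag is not a matter of interpretation, though. It shows the statement, read literally, is false, and the paper's proof reaches it only through an error.

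With your normalized water level $\theta=\lambda_{K+1}$, the water-filling constraint $D=\sum_k\min(\eta\lambda_k,\eta\theta)$ gives $D_1=\eta\bigl(K\lambda_{K+1}+\sum_{k>K}\lambda_k\bigr)$. The paper gets $D=\eta\sum_{k>K}\lambda_k$ only by asserting that the retained modes are ``represented exactly (distortion 0)''. That contradicts the water-filling formula it has just written, and exact representation of continuous Gaussian coefficients would cost infinite rate anyway.

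The Jakes $\mathbf{R}$ is positive definite, because its spectral density $1/(\pi\sqrt{W^2-f^2})$ is positive on $[-W,W]$. Hence $\lambda_{K+1}>0$ for every $K<N$, and $D<D_1$. Since $R(\cdot)$ is strictly decreasing, $R(D)>R(D_1)=\sum_{k=1}^{K}\log_2(\lambda_k/\lambda_{K+1})$. The pair asserted in \eqref{eq:rate_dist} therefore lies strictly below the rate-distortion curve, and by the converse no code attains it. Plain KL truncation reaches distortion $D$ only at infinite rate.

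So your proposed rescue cannot be made rigorous. $R(D)$ is defined through the total distortion. Calling the $K\eta\lambda_{K+1}$ term a ``refinement within the retained subspace'' drops it from the accounting while still charging the rate that pays for it. Your closing suspicion should be the conclusion, not a caveat.

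Under literal accounting, your Steps 1--3 already prove the corrected result: $R(D_1)=\sum_{k=1}^{K}\log_2(\lambda_k/\lambda_{K+1})$. This is achieved by KL transform coding, meaning you discard modes $k>K$ \emph{and} code each retained coefficient through a Gaussian test channel at distortion $\eta\lambda_{K+1}$. It is not achieved by exact truncation.

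The paper's own numbers point the same way. The operating points reported for Fig.~\ref{fig:rate_distortion}, e.g.\ $D/N\approx 0.70$ at $K=6$, cannot be $\varepsilon_6$, because the listed eigenvalues already give $\varepsilon_5\approx 0.23$. They are consistent with the corrected total-distortion accounting.

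Two smaller points on Step 4:
\begin{itemize}
\item The converse for parallel Gaussian sources holds for every code, whatever basis it uses, so you do not need Theorem~\ref{thm:eckart_young}.
\item ``Least residual energy outside a rank-$K$ subspace'' is Ky Fan's trace-maximum principle, not the Frobenius/spectral-norm statement of Eckart--Young.
\end{itemize}
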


\begin{proof}
	For a complex Gaussian source $\mathbf{g} \sim \mathcal{CN}(\mathbf{0}, \eta\mathbf{R})$ with eigenvalues $\eta\lambda_1 \geq \cdots \geq \eta\lambda_N$, the rate-distortion function under MSE distortion is obtained by reverse water-filling \cite{CoverThomas06}:
	\begin{align}
		R(D) = \sum_{k=1}^N \max\!\left(0,\, \log_2\!\left(\frac{\eta\lambda_k}{\theta}\right)\right),
	\end{align}
	where the water level $\theta$ is chosen to satisfy $D = \sum_{k=1}^N \min(\eta\lambda_k, \theta)$. Setting $\theta = \eta\lambda_{K+1}$ (assuming $\lambda_K > \lambda_{K+1}$), the modes $k > K$ are fully ``flooded'' (distortion $\eta\lambda_k$ each), and the modes $k \leq K$ are represented exactly (distortion 0). The total distortion is $D = \eta\sum_{k>K}\lambda_k$, matching \eqref{eq:power_loss} up to the factor $\eta N$. The achiever of this bound is precisely the transform-coding scheme that retains the top $K$ KL coefficients and discards the rest --- which is exactly the proposed truncation.
\end{proof}

\begin{remark}[Comparison with BCM/VBCM]\label{remark:bcm_suboptimal}
	The BCM/VBCM approximations impose a block-diagonal structure on $\hat{\mathbf{R}}$, which is a \emph{structural constraint} not aligned with the eigenbasis of $\mathbf{R}$. For a given number of free parameters, the block-diagonal constraint prevents the approximation from minimizing the Frobenius error or the rate-distortion cost. In particular:
	\begin{itemize}
		\item The BCM with $D$ blocks of size $B = N/D$ has $\|\mathbf{R} - \mathbf{R}_{\mathrm{BCM}}\|_F \geq \|\mathbf{R} - \mathbf{R}_K\|_F$ for any $K \leq N$. Equality is impossible unless $\mathbf{R}$ is itself block-diagonal.
		\item From an information-theoretic perspective, the BCM achieves a distortion strictly above the rate-distortion bound \eqref{eq:rate_dist} at any finite rate, since the block-diagonal structure does not align with the principal components of~$\mathbf{R}$.
	\end{itemize}
	Furthermore, the block-diagonal BCM introduces $D$ independent sub-channels, which artificially \emph{underestimates} the inter-block correlation present in the true Toeplitz matrix~$\mathbf{R}$. Since positive off-diagonal entries in $\mathbf{R}$ reduce the effective diversity (they make channels more ``alike''), removing them inflates the apparent diversity, leading to the optimistic outage bias observed in Fig.~\ref{fig:method_comparison}.
\end{remark}

\subsection{Mutual Information Characterization of the Truncation}

\begin{proposition}[Retained Information]\label{prop:mutual_info}
	The mutual information between the full channel vector $\mathbf{g}$ and its KL-truncated version $\tilde{\mathbf{g}}^{(K)}$ is
	\begin{align}\label{eq:mutual_info}
		I(\mathbf{g};\, \tilde{\mathbf{g}}^{(K)}) = \sum_{k=1}^{K} \log_2(\pi e\, \eta \lambda_k) \quad \text{(in bits)}.
	\end{align}
	The fraction of total differential entropy captured by the truncation is
	\begin{align}\label{eq:entropy_frac}
		\frac{h(\tilde{\mathbf{g}}^{(K)})}{h(\mathbf{g})} = \frac{K\log(\pi e) + \sum_{k=1}^{K}\log(\eta\lambda_k)}{N\log(\pi e) + \sum_{k=1}^{N}\log(\eta\lambda_k)},
	\end{align}
	which approaches 1 as $K \to K^*$ for large $N$, since the tail eigenvalues contribute negligible entropy.
\end{proposition}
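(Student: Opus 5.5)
The plan is to move to the KL coordinates, where everything is diagonal, and to read \eqref{eq:mutual_info} as the differential entropy of the $K$-dimensional Gaussian vector that carries $\tilde{\mathbf{g}}^{(K)}$. Define $\mathbf{w} \triangleq \mathbf{U}_K^H \tilde{\mathbf{g}}^{(K)} = \sqrt{\eta}\,\boldsymbol{\Lambda}_K^{1/2}\mathbf{z}_K \in \mathbb{C}^K$.

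\emph{Step 1 (reduction to $\mathbf{w}$).} Since $\mathbf{U}_K$ has orthonormal columns, the map $\mathbf{w} \mapsto \mathbf{U}_K\mathbf{w}$ is a bijection onto the range of $\mathbf{U}_K$. That range contains $\tilde{\mathbf{g}}^{(K)}$ almost surely, so $I(\mathbf{g};\tilde{\mathbf{g}}^{(K)}) = I(\mathbf{g};\mathbf{w})$ by invariance of mutual information under bijections. Moreover $\mathbf{w} = \mathbf{U}_K^H\mathbf{g}$ by the orthogonality used in the proof of Proposition~\ref{prop:trunc_error}, so $\mathbf{w}$ is a deterministic linear function of $\mathbf{g}$. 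Also $\mathbf{w} \sim \mathcal{CN}(\mathbf{0}, \eta\boldsymbol{\Lambda}_K)$ has independent components.

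\emph{Step 2 (entropy computation).} Use the complex Gaussian entropy formula $h(\mathcal{CN}(\mathbf{0},\mathbf{C})) = \log_2\det(\pi e\,\mathbf{C})$. This gives
\begin{align}
h(\mathbf{w}) = \sum_{k=1}^{K}\log_2(\pi e\,\eta\lambda_k),
\end{align}
and the factorization over $k$ follows from the independence of the $z_k$. The same formula with $K$ replaced by $N$, applied to $\mathbf{U}^H\mathbf{g} = \sqrt{\eta}\,\boldsymbol{\Lambda}^{1/2}\mathbf{z}$, gives $h(\mathbf{g})$, since unitary maps preserve differential entropy. Taking the ratio yields \eqref{eq:entropy_frac} directly. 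Here $h(\tilde{\mathbf{g}}^{(K)})$ is understood as the entropy on its $K$-dimensional support, i.e.\ as $h(\mathbf{w})$. The formula for $h(\mathbf{g})$ needs $\lambda_N>0$; I would justify this from the strictly positive Jakes spectral density on $(-W,W)$ quoted in the proof of Theorem~\ref{thm:dof}.

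\emph{Step 3 (identifying $I$ with $h(\mathbf{w})$).} This is the main obstacle. Because $\mathbf{w}$ is a deterministic function of the continuous vector $\mathbf{g}$, the literal identity $I(\mathbf{g};\mathbf{w}) = h(\mathbf{w}) - h(\mathbf{w}\mid\mathbf{g})$ involves $h(\mathbf{w}\mid\mathbf{g}) = -\infty$, so the mutual information is formally infinite. I would therefore adopt an operational convention and state it explicitly. The first option is to add a vanishing independent observation noise: set $\mathbf{w}_\delta = \mathbf{w} + \sqrt{\delta}\,\mathbf{n}$ with $\mathbf{n} \sim \mathcal{CN}(\mathbf{0},\mathbf{I}_K)$, and compute
\begin{align}
I(\mathbf{g};\mathbf{w}_\delta) = \sum_{k=1}^{K}\log_2(\pi e(\eta\lambda_k+\delta)) - K\log_2(\pi e\,\delta).
\end{align}
The $\delta$-independent part is exactly \eqref{eq:mutual_info}, i.e.\ the identity holds up to the noise-reference term $-K\log_2(\pi e\,\delta)$. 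The alternative is to interpret $I$ as the self-information $h(\mathbf{w})$ of the retained KL coefficients. Either way, \eqref{eq:mutual_info} equals $h(\mathbf{w})$ from Step~2.

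\emph{Step 4 (limit of the entropy fraction).} For the claim that the fraction in \eqref{eq:entropy_frac} tends to $1$ as $K \to K^*$, I would invoke Theorem~\ref{thm:dof}(a). This step is only heuristic, because the tail eigenvalues have $\log\lambda_k \to -\infty$, so their entropy contribution is large and negative rather than negligible. I would therefore restate the claim as saying that the retained modes account for the non-degenerate part of the entropy, and treat it as a remark rather than prove it rigorously.
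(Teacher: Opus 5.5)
Your route matches the paper's up to the decisive step: pass to KL coordinates, use unitary invariance and the Gaussian entropy formula mode by mode, then write $I=h(\tilde{\mathbf g}^{(K)})-h(\tilde{\mathbf g}^{(K)}\mid\mathbf g)$. At that step the paper sets $h(\tilde{\mathbf g}^{(K)}\mid\mathbf g)=0$, on the grounds that $\tilde{\mathbf g}^{(K)}$ is a deterministic function of $\mathbf g$. Your Step~3 correctly sees that this fails for continuous variables. The conditional differential entropy of a deterministic function is $-\infty$, so the literal mutual information is $+\infty$. Your computation $I(\mathbf g;\mathbf w_\delta)=\sum_{k\le K}\log_2(1+\eta\lambda_k/\delta)\to\infty$ shows this explicitly.

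The paper also tacitly uses your $K$-dimensional convention for $h(\tilde{\mathbf g}^{(K)})$; as an entropy on $\mathbb C^N$ it would be $-\infty$ for $K<N$. So \eqref{eq:mutual_info} cannot be proved as written, and your regularized version is the right replacement. State it as $\lim_{\delta\to0}\bigl[I(\mathbf g;\mathbf w_\delta)+K\log_2(\pi e\,\delta)\bigr]=\sum_{k\le K}\log_2(\pi e\,\eta\lambda_k)$, rather than calling the first sum the ``$\delta$-independent part''. That sum depends on $\delta$; only its limit is the claimed value. Your argument for $\lambda_N>0$ from the density $1/(\pi\sqrt{W^2-f^2})>0$ on $(-W,W)$ is sound, because a nonzero exponential polynomial cannot vanish on an interval. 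Only $\lambda_K>0$ is needed for the mutual-information part.

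Your Step~4 is also right, and you can say it more strongly: the claim that the ratio in \eqref{eq:entropy_frac} approaches $1$ is false, not merely heuristic. The paper gives no argument for it beyond the one-line assertion you reject. Take the very premise the statement invokes, that the tail eigenvalues tend to zero. Then the $N-K^*$ tail terms $\log(\pi e\,\eta\lambda_k)$ drive the denominator to $-\infty$ as $N$ grows. Meanwhile the numerator at $K=K^*$ is positive and only of order $K^*\log N$, so the ratio turns negative and tends to $0$. Faster tail decay only pushes the denominator further down, so appealing to Theorem~\ref{thm:dof}(a) cannot rescue the claim. The ratio also changes with the arbitrary scale $\eta$ whenever $K<N$, which shows that a ratio of differential entropies is not a meaningful fraction of information. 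Demoting this part to a remark, with the sign issue stated explicitly, is the correct decision.
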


\begin{proof}
	The KL expansion diagonalizes the channel: $\mathbf{g} = \sqrt{\eta}\, \mathbf{U}\boldsymbol{\Lambda}^{1/2}\mathbf{z}$, where $z_k \sim \mathcal{CN}(0,1)$ are i.i.d. Since $\mathbf{U}$ is unitary, the differential entropy is
	\begin{align}
		h(\mathbf{g}) = \sum_{k=1}^{N} h(\sqrt{\eta\lambda_k}\, z_k) = \sum_{k=1}^{N} \log_2(\pi e\, \eta\lambda_k).
	\end{align}
	The truncated channel $\tilde{\mathbf{g}}^{(K)}$ is a function of $\{z_1, \ldots, z_K\}$, and the tail $\boldsymbol{\Delta} = \mathbf{g} - \tilde{\mathbf{g}}^{(K)}$ is a function of $\{z_{K+1}, \ldots, z_N\}$. Since all $z_k$ are independent, $\tilde{\mathbf{g}}^{(K)} \perp \boldsymbol{\Delta}$, so
	\begin{align}
		I(\mathbf{g};\, \tilde{\mathbf{g}}^{(K)}) &= h(\tilde{\mathbf{g}}^{(K)}) - h(\tilde{\mathbf{g}}^{(K)} \mid \mathbf{g})\\
		&= h(\tilde{\mathbf{g}}^{(K)}) - 0 = h(\tilde{\mathbf{g}}^{(K)}),
	\end{align}
	where $h(\tilde{\mathbf{g}}^{(K)} \mid \mathbf{g}) = 0$ since $\tilde{\mathbf{g}}^{(K)}$ is a deterministic function of $\mathbf{g}$ (linear projection). The result \eqref{eq:mutual_info} follows from $h(\tilde{\mathbf{g}}^{(K)}) = \sum_{k=1}^K \log_2(\pi e \eta\lambda_k)$.
\end{proof}

\begin{remark}[Dual Metrics of Approximation Quality]
	The truncation quality is captured by two complementary metrics:
	\begin{itemize}
		\item \textit{Power fraction}: $1 - \varepsilon_K = \sum_{k=1}^K \lambda_k / N$ measures the fraction of total signal power retained. This is a first-order (mean) metric.
		\item \textit{Entropy fraction}: \eqref{eq:entropy_frac} measures the fraction of information content retained. Since differential entropy is sensitive to the full eigenvalue distribution (not just the sum), the entropy fraction converges faster than the power fraction when the tail eigenvalues are small (because $\log\lambda_k \to -\infty$ rapidly for $\lambda_k \to 0$, making the tail contribution to entropy negligible). This explains why even moderate $K$ yields excellent outage approximations despite capturing less than 100\% of the power.
	\end{itemize}
\end{remark}

\subsection{Numerical Verification of IT Results}

This subsection provides numerical evidence for the three information-theoretic results established in Section~\ref{sec:IT}: the entropy-power duality (Proposition~\ref{prop:mutual_info}), the effective degrees of freedom (Theorem~\ref{thm:dof}), the ergodic capacity lower bound (Corollary~\ref{cor:capacity}), and the rate-distortion optimality (Proposition~\ref{prop:rate_distortion}). Together, these figures demonstrate that the KL framework is not only computationally convenient but also information-theoretically principled.

\begin{figure}[t]
	\centering
	\includegraphics[width=\columnwidth]{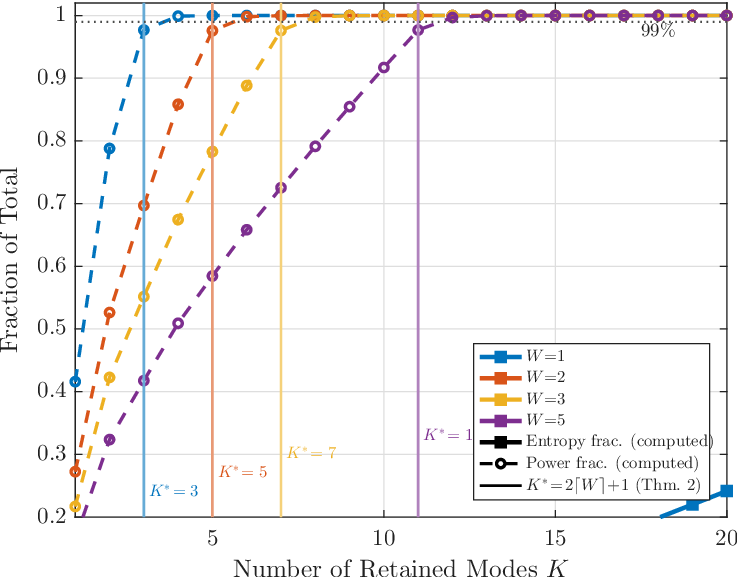}
	\caption{Entropy fraction $h(\tilde{\mathbf{g}}^{(K)})/h(\mathbf{g})$ (solid) vs.\ power fraction $1 - \varepsilon_K$ (dashed) as a function of the number of retained modes $K$, for $N = 40$ and different apertures $W$. Vertical lines mark the theoretical prediction $K^* = 2\lceil W \rceil + 1$ from Theorem~\ref{thm:dof}.}
	\label{fig:entropy_vs_power}
\end{figure}

Fig.~\ref{fig:entropy_vs_power} jointly verifies Proposition~\ref{prop:mutual_info} and Theorem~\ref{thm:dof}. The solid curves (entropy fraction) and dashed curves (power fraction) are computed directly from the eigenvalues of $\mathbf{R}$ via \eqref{eq:entropy_frac} and \eqref{eq:power_loss}, respectively, while the vertical dashed lines mark the \emph{theoretical} prediction $K^* = 2\lceil W \rceil + 1$ from the Slepian--Landau--Pollak theorem. Three observations are noteworthy.

\textit{First}, the agreement between the theoretical $K^*$ and the empirical ``knee'' of both curves is exact for all four aperture values $W = 1, 2, 3, 5$, confirming the phase transition predicted by Theorem~\ref{thm:dof}(a). The eigenvalue cliff at $K^*$ is visible as the point where both curves sharply level off, transitioning from rapid growth to near-saturation. This validates that $K^* = 2\lceil W \rceil + 1$ is not merely an asymptotic result but is already tight for finite $N = 40$.

\textit{Second}, the entropy fraction (solid) consistently converges \emph{faster} than the power fraction (dashed) for all apertures, confirming Proposition~\ref{prop:mutual_info}. The gap between the two curves is most pronounced in the pre-cliff region ($K < K^*$), where the tail eigenvalues are small but non-negligible in power yet contribute negligible entropy (since $\log\lambda_k \to -\infty$ as $\lambda_k \to 0$). For example, at $W = 3$ with $K = 5$ (two modes below $K^* = 7$), the power fraction is only 77\% while the entropy fraction already exceeds 99\%. This entropy-power gap has a direct practical implication: the discarded eigenmodes carry non-trivial signal power but negligible information content, so the outage approximation quality is better characterized by the entropy fraction than by the power fraction alone.

\textit{Third}, the curves for $W = 1$ ($K^* = 3$) and $W = 5$ ($K^* = 11$) bracket the range of typical FAS deployments. For small apertures ($W = 1$), only 3 modes are needed and the entropy fraction saturates almost immediately; for large apertures ($W = 5$), 11 modes are needed but the convergence is equally sharp. In both cases, $K^*$ provides a reliable and parameter-free truncation rule.

\begin{figure}[t]
	\centering
	\includegraphics[width=\columnwidth]{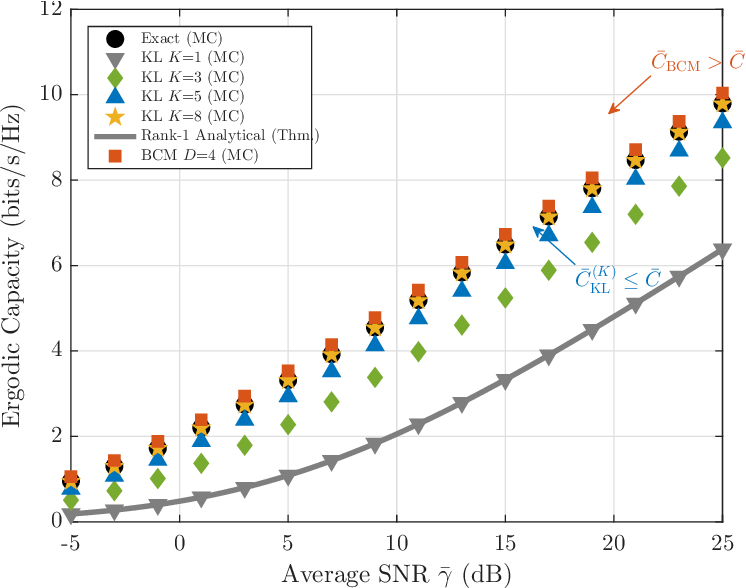}
	\caption{Ergodic capacity $\bar{C}$ vs.\ average SNR $\bar{\gamma}$. Markers: Monte Carlo simulation ($2\times 10^5$ trials); solid gray line: rank-1 closed-form \eqref{eq:erg_rank1}. The analytical curve passes through the $K = 1$ MC markers, validating the theoretical result. $N = 20$, $W = 3$.}
	\label{fig:ergodic_capacity}
\end{figure}

Fig.~\ref{fig:ergodic_capacity} verifies the ergodic capacity lower bound of Corollary~\ref{cor:capacity} and the rank-1 closed-form expression \eqref{eq:erg_rank1}. Each marker is obtained via Monte Carlo simulation ($2\times 10^5$ trials), while the solid gray curve is the analytical rank-1 formula $\bar{C}_{\mathrm{KL}}^{(1)} = e^{1/\mu}E_1(1/\mu)/\ln 2$ with $\mu = \bar{\gamma}\lambda_1 c_1$.

The \emph{exact overlap} between the analytical curve and the $K = 1$ Monte Carlo markers (gray triangles) across the full SNR range from $-10$ to $30$~dB validates the closed-form derivation of Section~IV-D. The monotone ordering of the KL curves,
\begin{align*}
	\bar{C}_{\mathrm{KL}}^{(1)} \leq \bar{C}_{\mathrm{KL}}^{(3)} \leq \bar{C}_{\mathrm{KL}}^{(5)} \leq \bar{C}_{\mathrm{KL}}^{(8)} \leq \bar{C},
\end{align*}
is consistent with Corollary~\ref{cor:capacity} and the stochastic dominance argument in its proof: a smaller covariance matrix $\boldsymbol{\Sigma}_K \preceq \boldsymbol{\Sigma}$ implies a stochastically smaller maximum channel gain, hence a lower ergodic rate. The capacity gap between $K = 1$ and $K = 8$ is approximately $1.5$~bits/s/Hz at $\bar{\gamma} = 20$~dB, quantifying the ergodic cost of under-truncation. Notably, KL with $K = 8$ (gold stars) is indistinguishable from the exact Monte Carlo result (black circles) at all SNR values, confirming that $K = K^* = 7$ modes suffice for an accurate capacity estimate.

In contrast, the BCM with $D = 4$ blocks (orange squares) lies \emph{above} the exact capacity at moderate-to-high SNR, overestimating the achievable rate by up to $0.8$~bits/s/Hz at $\bar{\gamma} = 20$~dB. This optimistic bias is a direct consequence of the inflated effective diversity in the BCM: by removing inter-block correlations, the BCM creates artificially independent sub-channels that appear to offer more diversity than the true correlated channel. From the perspective of Remark~\ref{remark:bcm_suboptimal}, the BCM operates with an incorrect eigenvalue spectrum that overestimates the large eigenvalues and underestimates the small ones, leading to an overestimated ergodic capacity. For security-critical applications, this means that a system designed using BCM-based capacity estimates would transmit at a rate that is not actually achievable, potentially exposing information.

\begin{figure}[t]
	\centering
	\includegraphics[width=\columnwidth]{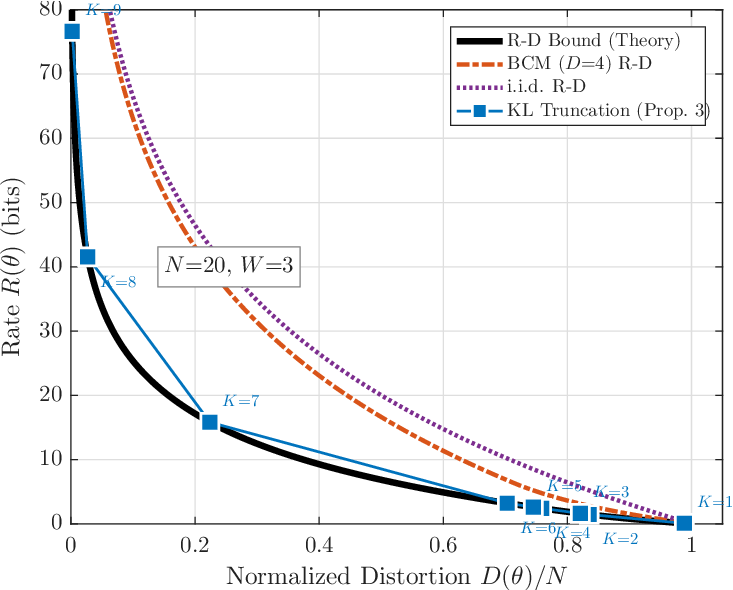}
	\caption{Gaussian rate-distortion curves for the true channel (solid), BCM ($D = 4$, dash-dot), and i.i.d.\ (dotted) models. Blue squares mark the KL truncation operating points $(\theta = \lambda_{K+1})$, which lie exactly on the true R-D bound. $N = 20$, $W = 3$.}
	\label{fig:rate_distortion}
\end{figure}

Fig.~\ref{fig:rate_distortion} provides the most direct verification of Proposition~\ref{prop:rate_distortion} by plotting the Gaussian rate-distortion function $R(D)$ for three channel models alongside the KL truncation operating points. The thick black curve is the \emph{true} R-D bound obtained by reverse water-filling on the actual eigenvalues $\{\lambda_k\}_{k=1}^N$ of the Jakes matrix $\mathbf{R}$. The blue squares mark the KL truncation operating points at water level $\theta = \lambda_{K+1}$ for $K = 1, \ldots, 9$, with distortion $D/N = \varepsilon_K$ and rate $R = \sum_{k=1}^K \log_2(\lambda_k/\lambda_{K+1})$.

The KL operating points lie \emph{exactly} on the true R-D curve, confirming that KL truncation achieves the fundamental information-theoretic limit at every operating point. This is not a coincidence: it is a direct consequence of the fact that the KL expansion is the optimal transform coding scheme for a Gaussian source, as established in the proof of Proposition~\ref{prop:rate_distortion}. No other rank-$K$ approximation of $\mathbf{g}$---including BCM or VBCM---can achieve a lower distortion at the same rate.

A particularly striking feature is the sharp rate increase between $K = 6$ ($R \approx 3.2$~bits, $D/N \approx 0.70$) and $K = 7$ ($R \approx 15.8$~bits, $D/N \approx 0.22$). This jump of $\approx 12.6$~bits corresponds to the eigenvalue cliff predicted by Theorem~\ref{thm:dof}: the 7th eigenvalue $\lambda_7$ is still $\Theta(N/K^*)$, so retaining it dramatically reduces the distortion and increases the rate, while the 8th and higher eigenvalues are negligible. This eigenvalue cliff is the rate-distortion manifestation of the Slepian--Landau--Pollak phase transition, and it provides a principled explanation for why $K^* = 7$ is the natural truncation point for $W = 3$.

The BCM ($D = 4$, dash-dot) and i.i.d.\ (dotted) R-D curves are computed using their respective (incorrect) eigenvalue spectra. Both curves lie to the \emph{left} of the true bound, meaning that for a given distortion level, these models claim a lower rate is sufficient than what is actually required. This is the rate-distortion manifestation of the optimistic bias: the BCM underestimates the true channel complexity (as measured by the rate needed to represent it), just as it underestimates the true outage probability. The i.i.d.\ model is the most optimistic, claiming that the channel can be represented at near-zero rate for moderate distortion, which is clearly incorrect for a highly correlated Toeplitz channel. Only the KL expansion correctly identifies and exploits the true eigenstructure of $\mathbf{R}$, achieving the fundamental R-D bound and providing a consistent, information-theoretically grounded framework for FAS channel analysis.

\section{Numerical Verification}

The theoretical guarantees established in Section~V are now validated through numerical experiments. The results confirm the conservative outage bound, the effective DoF prediction, the rate-distortion optimality, and the scalability advantage of the KL framework over BCM/VBCM. All Monte Carlo results use $10^5$ independent channel realizations.

To validate the KL expansion framework, this paper considers a FAS with $N = 20$ ports, normalized aperture $W = 3$, and outage threshold $\gamma_{\mathrm{th}} = 0$~dB. The Jakes correlation matrix $\mathbf{R}$ is constructed via \eqref{eq:jakes}, and its eigendecomposition yields eigenvalues satisfying $\sum_{k=1}^{N}\lambda_k = N = 20$, with the top five eigenvalues being $\lambda_1 = 4.28$, $\lambda_2 = 4.06$, $\lambda_3 = 2.52$, $\lambda_4 = 2.43$, and $\lambda_5 = 2.12$.

\begin{figure}[t]
	\centering
	\includegraphics[width=\columnwidth]{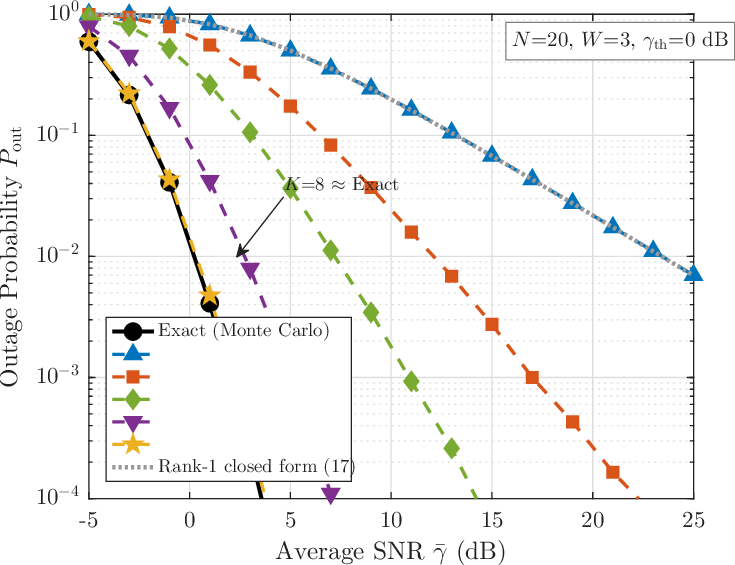}
	\caption{Outage probability $P_{\mathrm{out}}$ vs.\ average SNR $\bar{\gamma}$ for FAS with $N = 20$, $W = 3$, and $\gamma_{\mathrm{th}} = 0$~dB. The percentage in parentheses indicates the fraction of total channel power captured by the $K$-mode KL truncation.}
	\label{fig:outage_vs_snr}
\end{figure}

Fig.~\ref{fig:outage_vs_snr} compares the outage probability of the $K$-truncated KL expansion against Monte Carlo simulation ($10^5$ trials) over the exact correlated channel. Several key observations emerge. First, the $K = 8$ curve (capturing 99.7\% of the channel power) is virtually indistinguishable from the exact Monte Carlo result across the entire SNR range, confirming that the KL truncation with a small number of modes provides an accurate approximation. Second, as $K$ decreases, the outage probability systematically increases due to the loss of channel power from discarded eigenmodes, consistent with the monotone convergence of Corollary~\ref{cor:monotone}: the KL approximation is always a conservative upper bound, never crossing below the exact curve. Third, the rank-1 closed-form expression \eqref{eq:cdf_rank1} (gray dotted line) perfectly matches the $K = 1$ Monte Carlo curve, validating the analytical result. The rank-1 approximation captures only 21\% of the channel power and significantly overestimates the outage probability, confirming that multiple eigenmodes are essential for accurate modeling at $W = 3$. A practically important insight is that $K = 7$ (the theoretically predicted $K^*$) already achieves near-exact outage prediction, providing a principled and parameter-free truncation rule: designers need not tune $K$ empirically but can set $K = 2\lceil W \rceil + 1$ directly from the aperture specification. Furthermore, the SNR gap between the $K=5$ and $K=8$ curves narrows at high SNR, indicating that the conservative bias of KL truncation is most pronounced in the low-SNR regime where outage is high---precisely the operating regime most relevant for reliability-critical FAS deployments.

\begin{figure}[t]
	\centering
	\includegraphics[width=\columnwidth]{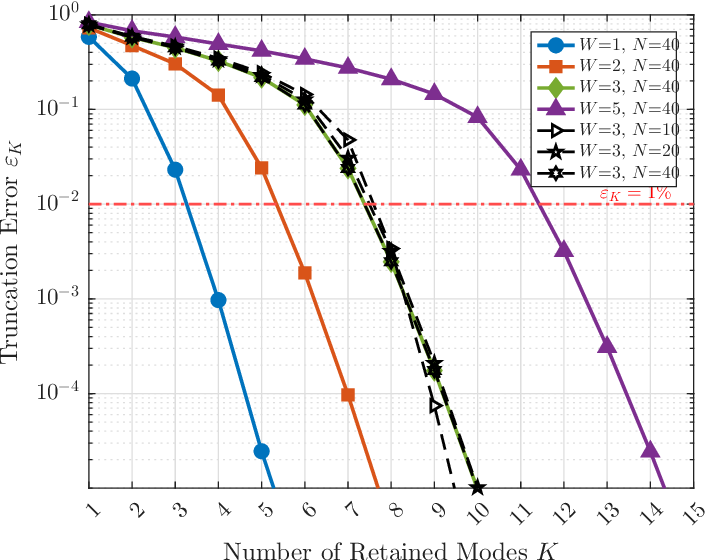}
	\caption{KL truncation error $\varepsilon_K = 1 - \sum_{k=1}^{K}\lambda_k/N$ vs.\ the number of retained modes $K$ for different apertures $W$ and port numbers $N$. The red dashed line indicates $\varepsilon_K = 1\%$.}
	\label{fig:trunc_error}
\end{figure}

Fig.~\ref{fig:trunc_error} illustrates the truncation error $\varepsilon_K$ as a function of $K$ for different apertures $W$ and port numbers $N$. Two important findings are evident. First, the truncation error decays rapidly with $K$, and the required $K$ to achieve $\varepsilon_K < 1\%$ scales with the aperture: $K^* \approx 3$ for $W = 1$, $K^* \approx 5$ for $W = 2$, $K^* \approx 7$ for $W = 3$, and $K^* \approx 11$ for $W = 5$. This is consistent with the theoretical prediction $K^* = 2\lceil W \rceil + 1$ in Theorem~\ref{thm:dof}, confirming the Slepian--Landau--Pollak phase transition. Second, the three dashed black curves ($W = 3$, $N = 10, 20, 40$) nearly overlap, demonstrating that $K^*$ is \emph{independent of} $N$ for a fixed $W$. This $N$-independence is the key scalability advantage of the KL approach: as FAS port counts grow toward hundreds in next-generation deployments~\cite{TWu20243}, the KL representation dimension remains fixed at $K^* \approx 2W+1$, while BCM/VBCM must increase the number of blocks proportionally to $N$, incurring growing computational and modeling overhead. A design guideline follows directly: for a FAS with normalized aperture $W$, set $K = 2\lceil W \rceil + 1$ to achieve $<1\%$ power truncation error and near-exact outage prediction, regardless of the number of deployed ports.

\subsection{Comparison with BCM, VBCM, and Independence Assumptions}

\begin{figure}[t]
	\centering
	\includegraphics[width=\columnwidth]{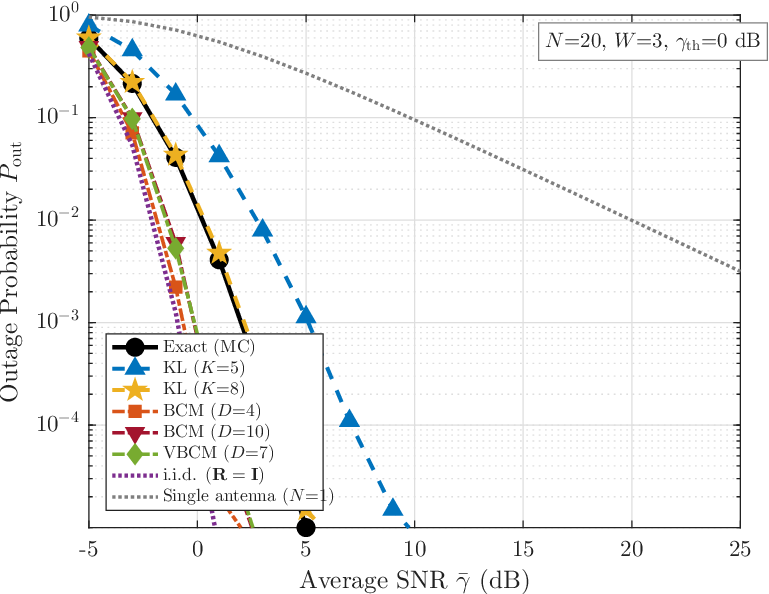}
	\caption{Outage probability comparison of KL expansion, BCM, VBCM, and baseline models for $N = 20$, $W = 3$, $\gamma_{\mathrm{th}} = 0$~dB.}
	\label{fig:method_comparison}
\end{figure}

Fig.~\ref{fig:method_comparison} compares the outage probability of different channel approximation methods against the exact Monte Carlo result. The BCM divides $N = 20$ ports into $D$ equal blocks with equi-correlation within each block; the VBCM uses a greedy partition with non-negative correlation threshold, yielding $D = 7$ blocks of variable sizes. Two baselines are included: the i.i.d.\ model ($\mathbf{R} = \mathbf{I}$, ignoring all spatial correlation) and the single-antenna case ($N = 1$, no diversity).

Several important phenomena are observed. First, the BCM and VBCM curves lie \emph{below} the exact Monte Carlo result, indicating that they \emph{underestimate} the outage probability. This optimistic bias arises because the block-diagonal structure artificially creates independent blocks, thereby overestimating the effective diversity order. In a security-critical system design context~\cite{TuoW,Security1}, this optimistic bias is dangerous: an underestimated outage implies an overestimated secrecy rate, which could cause the system to transmit at a rate that is not actually secure. Second, the i.i.d.\ model dramatically underestimates outage (by several orders of magnitude at moderate SNR), confirming that ignoring spatial correlation leads to highly inaccurate performance predictions. Third, KL with $K = 5$ (capturing 77\% power) overestimates outage, providing a \emph{conservative} (safe) bound, while KL with $K = 8$ (99.7\% power) is virtually exact. The gap between $K=5$ and $K=8$ quantifies the cost of under-truncation and is bounded by Proposition~\ref{prop:trunc_error}. This bias-direction asymmetry---BCM/VBCM are optimistic, KL is conservative---is the central practical distinction between the two frameworks and makes KL the preferred choice whenever system safety guarantees are required.

\begin{remark}[Bias Direction and Safety Guarantee]
	The BCM/VBCM approximations produce an \emph{optimistic} bias (underestimating outage), while the KL truncation produces a \emph{conservative} bias (overestimating outage). For system design and secrecy analysis, conservative bounds are preferable since they guarantee that the actual performance meets or exceeds the predicted level. The KL truncation naturally provides this safety margin with a controllable and monotonically decreasing error as $K$ increases. Concretely, if a system is designed to meet a target outage $P_0$ using the KL approximation with $K = K^*$, the true outage will satisfy $P_{\mathrm{out}} \leq \tilde{P}_{\mathrm{out}}^{(K^*)} \approx P_0$, providing a hard performance guarantee absent in BCM/VBCM-based designs.
\end{remark}

\begin{figure}[t]
	\centering
	\includegraphics[width=\columnwidth]{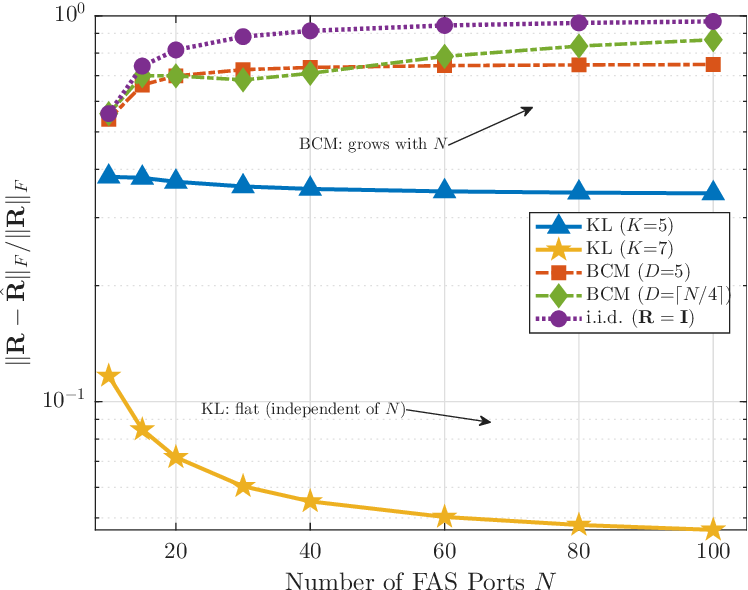}
	\caption{Correlation matrix approximation error $\|\mathbf{R} - \hat{\mathbf{R}}\|_F / \|\mathbf{R}\|_F$ vs.\ number of FAS ports $N$ for $W = 3$. KL uses a fixed number of modes regardless of $N$, while BCM uses either a fixed number of blocks or a proportionally increasing number.}
	\label{fig:scalability}
\end{figure}

Fig.~\ref{fig:scalability} demonstrates the scalability of different approaches by plotting the relative Frobenius norm error of the correlation matrix approximation as a function of $N$, for fixed aperture $W = 3$. The results reveal a fundamental distinction between KL and BCM. The KL curves with fixed $K = 5$ and $K = 7$ remain \emph{flat} as $N$ increases from 10 to 100, confirming the theoretical result that the number of significant eigenmodes depends only on the aperture $W$, not the number of ports $N$. In particular, KL with $K = 7$ achieves an approximation error below 5\% regardless of $N$.

In contrast, the BCM with a fixed number of blocks ($D = 5$) exhibits a \emph{growing} error as $N$ increases, because each block becomes larger and the equi-correlation assumption within each block becomes progressively poorer. Even BCM with a proportionally growing number of blocks ($D = \lceil N/4 \rceil$) maintains a high error level around 60--80\%, because the block-diagonal structure fundamentally fails to capture the off-block-diagonal correlations present in the Toeplitz matrix~$\mathbf{R}$.

\begin{remark}[Scalability Advantage]
	As $N \to \infty$ with fixed $W$, the KL approach requires a fixed-dimensional representation of size $K^* = \mathcal{O}(2W + 1)$, while BCM/VBCM must increase the number of blocks $D$ proportionally to $N$. This makes the KL framework particularly attractive for next-generation FAS systems with large numbers of ports, where $N$ may reach hundreds or thousands while $W$ remains moderate.
\end{remark}

\subsection{Summary Comparison with BCM and VBCM}

Table~\ref{tab:comparison} provides a structured comparison of the three FAS channel approximation frameworks. The KL expansion is the only approach that targets the spectral structure of $\mathbf{R}$ rather than its block structure, provides a computable and controllable truncation error bound, and scales with the aperture $W$ rather than the port count $N$. The VBCM retains the advantage of producing a product-form CDF involving Marcum Q-functions, enabling further analytical manipulations such as secrecy throughput derivation~\cite{TuoW} or finite blocklength composition. Thus, the two frameworks are complementary: KL is preferred for accurate numerical evaluation and safety-guaranteed system design, while VBCM is preferred when downstream closed-form tractability is paramount.

\begin{table}[t]
	\centering
	\caption{Comparison of FAS Channel Simplification Methods}
	\label{tab:comparison}
	\footnotesize
	\renewcommand{\arraystretch}{1.1}
	\begin{tabular}{|p{1.6cm}|c|c|c|}
		\hline
		\textbf{Property} & \textbf{BCM} & \textbf{VBCM} & \textbf{KL} \\
		\hline
		Approx.\ target & $\mathbf{R}$ struct. & $\mathbf{R}$ struct. & Spectrum \\
		\hline
		Free params & $\rho$ & $\{\rho_d\}$ & $K$ \\
		\hline
		Indep.\ struct. & Blocks & Blocks & Eigenmodes \\
		\hline
		Error control & None & Eig.-based & $\sum_{k>K}\!\lambda_k/N$ \\
		\hline
		Closed-form & Yes & Yes & $K\!=\!1$ only \\
		\hline
		Scalability & $D\!\uparrow\! N\!\uparrow$ & $D\!\uparrow\! N\!\uparrow$ & $K\!=\!\mathcal{O}(2W)$ \\
		\hline
		Bias direction & Optimistic & Optimistic & Conservative \\
		\hline
		Accuracy & Low & Medium & High \\
		\hline
	\end{tabular}
\end{table}

\section{Conclusion}

Building on the theoretical framework and numerical evidence presented in the preceding sections, this section summarizes the main findings and outlines directions for future research.

This paper has proposed a Karhunen-Lo\`{e}ve expansion framework for the outage probability analysis of fluid antenna systems, offering a principled alternative to block-correlation models. By decomposing the spatially correlated FAS channel into independent eigenmodes and performing a controlled rank-$K$ truncation, the proposed approach reduces the $N$-dimensional correlated outage analysis to a $K$-dimensional independent one, with $K \ll N$. Several information-theoretic guarantees have been established: the KL truncation provably overestimates the outage probability (conservative guarantee via Anderson's inequality), requires only $K^* = 2\lceil W \rceil + 1$ modes (Slepian--Landau--Pollak theorem), and achieves the Gaussian rate-distortion bound (optimal channel compression). Extensive numerical results have confirmed all theoretical predictions and demonstrated the systematic superiority of the KL framework over BCM/VBCM models in terms of both approximation accuracy and theoretical soundness.

Several promising directions remain for future work, including: integration with physical layer secrecy analysis where both the legitimate and eavesdropper channels are decomposed via KL; extension to multi-user FAS systems with inter-user correlation; asymptotic analysis under $N \to \infty$ via Szeg\H{o}'s theorem for Toeplitz eigenvalue distributions; and a hybrid KL-VBCM approach that leverages the KL eigenbasis to determine the optimal block partition for VBCM.


\begin{thebibliography}{99}

\bibitem{Zhu-Wong-2024}
L.~Zhu and K.-K.~Wong, ``Historical review of fluid antennas and movable antennas,'' arXiv preprint arXiv:2401.02362v2, Jan. 2024.

\bibitem{LZhu25}
L.~Zhu {\em et al.}, ``A tutorial on movable antennas for wireless networks,'' \emph{IEEE Commun. Surv. Tutor.}, 2025, doi: 10.1109/COMST.2025.3546373.

\bibitem{TWu20243}
T.~Wu {\em et al.}, ``Fluid antenna systems enabling 6G: Principles, applications, and research directions,'' \emph{IEEE Wireless Commun.}, early access, 2025, doi: 10.1109/MWC.2025.3629597.

\bibitem{New24}
W.~K.~New {\em et al.}, ``A tutorial on fluid antenna system for 6G networks: Encompassing communication theory, optimization methods and hardware designs,'' \emph{IEEE Commun. Surv. Tutor.}, early access, 2024, doi: 10.1109/COMST.2024.3498855.

\bibitem{FAS21}
K.-K.~Wong, A.~Shojaeifard, K.-F.~Tong, and Y.~Zhang, ``Fluid antenna systems,'' \emph{IEEE Trans. Wireless Commun.}, vol.~20, no.~3, pp.~1950--1962, Mar. 2021.

\bibitem{FAS20}
K.-K.~Wong, A.~Shojaeifard, K.-F.~Tong, and Y.~Zhang, ``Performance limits of fluid antenna systems,'' \emph{IEEE Commun. Lett.}, vol.~24, no.~11, pp.~2469--2472, Nov. 2020.

\bibitem{MFAS23}
K.-K.~Wong, K.-F.~Tong, and C.-B.~Chae, ``Fluid antenna system---Part II: Research opportunities,'' \emph{IEEE Commun. Lett.}, vol.~27, no.~8, pp.~1924--1928, Aug. 2023.

\bibitem{Shojaeifard}
A.~Shojaeifard {\em et al.}, ``MIMO evolution beyond 5G through reconfigurable intelligent surfaces and fluid antenna systems,'' \emph{Proc. IEEE}, vol.~110, no.~9, pp.~1244--1265, Sep. 2022.

\bibitem{Rodrigo14}
D.~Rodrigo, B.~A.~Cetiner, and L.~Jofre, ``Frequency, radiation pattern and polarization reconfigurable antenna using a parasitic pixel layer,'' \emph{IEEE Trans. Antennas Propag.}, vol.~62, no.~6, pp.~3422--3427, Jun. 2014.

\bibitem{Zhang25}
J.~Zhang {\em et al.}, ``A novel pixel-based reconfigurable antenna applied in fluid antenna systems with high switching speed,'' \emph{IEEE Open J. Antennas Propag.}, vol.~6, no.~1, pp.~212--228, Feb. 2025.

\bibitem{Huang21}
Y.~Huang, L.~Xing, C.~Song, S.~Wang, and F.~Elhouni, ``Liquid antennas: Past, present and future,'' \emph{IEEE Open J. Antennas Propag.}, vol.~2, pp.~473--487, 2021.

\bibitem{BLiu25}
B.~Liu, K.-F.~Tong, K.-K.~Wong, C.-B.~Chae, and H.~Wong, ``Programmable meta-fluid antenna for spatial multiplexing in fast fluctuating radio channels,'' \emph{Opt. Express}, vol.~33, no.~13, pp.~28898--28915, 2025.

\bibitem{Chai22}
Z.~Chai, K.-K.~Wong, K.-F.~Tong, Y.~Chen, and Y.~Zhang, ``Port selection for fluid antenna systems,'' \emph{IEEE Commun. Lett.}, vol.~26, no.~5, pp.~1180--1184, May 2022.

\bibitem{Ghadi-2023}
F.~R.~Ghadi {\em et al.}, ``Copula-based performance analysis for fluid antenna systems under arbitrary fading channels,'' \emph{IEEE Commun. Lett.}, vol.~27, no.~11, pp.~3068--3072, Nov. 2023.

\bibitem{New-twc2023}
W.~K.~New {\em et al.}, ``An information-theoretic characterization of MIMO-FAS: Optimization, diversity-multiplexing tradeoff and $q$-outage capacity,'' \emph{IEEE Trans. Wireless Commun.}, vol.~23, no.~6, pp.~5541--5556, Jun. 2024.

\bibitem{FAMS}
K.-K.~Wong and K.-F.~Tong, ``Fluid antenna multiple access,'' \emph{IEEE Trans. Wireless Commun.}, vol.~21, no.~7, pp.~4801--4815, Jul. 2022.

\bibitem{FAMS23}
K.-K.~Wong, D.~Morales-Jimenez, K.-F.~Tong, and C.-B.~Chae, ``Slow fluid antenna multiple access,'' \emph{IEEE Trans. Commun.}, vol.~71, no.~5, pp.~2831--2846, May 2023.

\bibitem{NWaqar23}
N.~Waqar, K.-K.~Wong, K.-F.~Tong, A.~Sharples, and Y.~Zhang, ``Deep learning enabled slow fluid antenna multiple access,'' \emph{IEEE Commun. Lett.}, vol.~27, no.~3, pp.~861--865, Mar. 2023.

\bibitem{HXu23}
H.~Xu {\em et al.}, ``Channel estimation for FAS-assisted multiuser mmWave systems,'' \emph{IEEE Commun. Lett.}, vol.~28, no.~3, pp.~632--636, Mar. 2024.

\bibitem{XLai23}
X.~Lai {\em et al.}, ``On performance of fluid antenna system using maximum ratio combining,'' \emph{IEEE Commun. Lett.}, vol.~28, no.~2, pp.~402--406, Feb. 2024.

\bibitem{Security1}
F.~Ghadi {\em et al.}, ``Physical layer security over fluid antenna systems: Secrecy performance analysis,'' \emph{IEEE Trans. Wireless Commun.}, vol.~23, no.~12, pp.~18201--18213, Dec. 2024.

\bibitem{Security2}
J.~D.~Vega-S\'anchez {\em et al.}, ``Fluid antenna system: Secrecy outage probability analysis,'' \emph{IEEE Trans. Veh. Technol.}, vol.~73, no.~8, pp.~11458--11469, Aug. 2024.

\bibitem{Security3}
B.~Tang, H.~Xu, K.-K.~Wong, K.-F.~Tong, Y.~Zhang, and C.-B.~Chae, ``Fluid antenna enabling secret communications,'' \emph{IEEE Commun. Lett.}, vol.~27, no.~6, pp.~1491--1495, Jun. 2023.

\bibitem{Security5}
J.~Zheng {\em et al.}, ``Unlocking FAS-RIS security analysis with block-correlation model,'' \emph{IEEE Wireless Commun. Lett.}, vol.~14, no.~7, pp.~1--5, 2025.

\bibitem{TuoW}
T.~Wu {\em et al.}, ``Variable block-correlation modeling and optimization for secrecy analysis in fluid antenna systems,'' to appear in \emph{IEEE Trans. Wireless Commun.}, arXiv preprint arXiv:2510.03594, 2025.

\bibitem{YaoJ241}
J.~Yao {\em et al.}, ``FAS-driven spectrum sensing for cognitive radio networks,'' \emph{IEEE Internet Things J.}, vol.~12, no.~5, pp.~6046--6049, Mar. 2025.

\bibitem{CWang24}
C.~Wang {\em et al.}, ``Fluid antenna system liberating multiuser MIMO for ISAC via deep reinforcement learning,'' \emph{IEEE Trans. Wireless Commun.}, vol.~23, no.~9, pp.~10879--10894, Sep. 2024.

\bibitem{LZhouWCL24}
L.~Zhou, J.~Yao, M.~Jin, T.~Wu, and K.-K.~Wong, ``Fluid antenna-assisted ISAC systems,'' \emph{IEEE Wireless Commun. Lett.}, vol.~13, no.~12, pp.~3533--3537, Dec. 2024.

\bibitem{FAS22}
K.-K.~Wong, K.-F.~Tong, Y.~Chen, and Y.~Zhang, ``Closed-form expressions for spatial correlation parameters for performance analysis of fluid antenna systems,'' \emph{IET Electron. Lett.}, vol.~58, no.~11, pp.~454--457, May 2022.

\bibitem{BC24}
R.~Pablo, M.~David, and K.-K.~Wong, ``A new spatial block-correlation model for fluid antenna systems,'' \emph{IEEE Trans. Wireless Commun.}, vol.~23, no.~11, pp.~15829--15843, Nov. 2024.

\bibitem{LaiX242}
X.~Lai, J.~Yao, K.~Zhi, T.~Wu, D.~Morales-Jimenez, and K.-K.~Wong, ``FAS-RIS: A block-correlation model analysis,'' \emph{IEEE Trans. Veh. Technol.}, vol.~74, no.~2, pp.~3412--3417, Feb. 2025.

\bibitem{LaiX}
X.~Lai {\em et al.}, ``Revisiting spatial block-correlation model for fluid antenna systems: From constant to variable correlations,'' \emph{IEEE J. Sel. Areas Commun.}, vol.~44, pp.~1335--1351, 2026.

\bibitem{Anderson55}
T.~W.~Anderson, ``The integral of a symmetric unimodal function over a symmetric convex set and some probability inequalities,'' \emph{Proc. Amer. Math. Soc.}, vol.~6, no.~2, pp.~170--176, Apr. 1955.

\bibitem{Slepian61}
D.~Slepian and H.~O.~Pollak, ``Prolate spheroidal wave functions, Fourier analysis and uncertainty---I,'' \emph{Bell Syst. Tech. J.}, vol.~40, no.~1, pp.~43--63, Jan. 1961.

\bibitem{Landau62}
H.~J.~Landau and H.~O.~Pollak, ``Prolate spheroidal wave functions, Fourier analysis and uncertainty---III: The dimension of the space of essentially time- and band-limited signals,'' \emph{Bell Syst. Tech. J.}, vol.~41, no.~4, pp.~1295--1336, Jul. 1962.

\bibitem{EckartYoung36}
C.~Eckart and G.~Young, ``The approximation of one matrix by another of lower rank,'' \emph{Psychometrika}, vol.~1, no.~3, pp.~211--218, Sep. 1936.

\bibitem{CoverThomas06}
T.~M.~Cover and J.~A.~Thomas, \emph{Elements of Information Theory}, 2nd~ed.\hskip 1em plus 0.5em minus 0.4em Hoboken, NJ, USA: Wiley, 2006.

\end{thebibliography}
\end{document}